\newtheorem{theorem}{\bf Theorem}[section]
\newtheorem{proposition}{\bf Proposition}[section]
\newtheorem{corollary}{\bf Corollary}[section]
\newtheorem{remark}{\bf Remark}[section]
\newtheorem{definition}{\bf Definition}[section]
\newtheorem{example}{\bf Example }[section]
\newenvironment{proof}{
\begin{trivlist}
\item[\hspace{\labelsep}{\bf\noindent Proof. }] }{\par\hfill $\Box$\end{trivlist}
\par}
\date{\plain}
\title{
\huge\bf Cumulative Information Generating Function and Generalized Gini Functions
\date{September 19, 2023}
}
\author{
\large   
Marco {\bf Capaldo}\thanks{ORCID: 0000-0002-0255-8935} \qquad
Antonio {\bf Di Crescenzo}\thanks{ORCID: 0000-0003-4751-7341} \qquad 
Alessandra {\bf Meoli} \thanks{ORCID: 0000-0002-3516-9530}\\
\\
\\
\normalsize Dipartimento di Matematica, Universit\`a degli Studi di Salerno\\ 
\normalsize Via Giovanni Paolo II, 132, I-84084 Fisciano (SA), Italy \\
\normalsize Email:  \{mcapaldo, adicrescenzo, ameoli\}@unisa.it  
}
\begin{document}
 
\maketitle

\abstract{
We introduce and study the cumulative information generating function, which 
provides a unifying mathematical tool suitable to deal with classical and fractional 
entropies based on the cumulative distribution function and on the survival function. 
Specifically, after establishing its main properties and some bounds, we show that 
it is a variability measure itself that extends the Gini mean semi-difference. 
We also provide (i) an extension of such a measure, based on distortion functions, 
and (ii) a weighted version based on a mixture distribution. 
Furthermore, we explore some connections with the reliability of $k$-out-of-$n$ 
systems and with stress-strength models for multi-component systems. 
Also, we address the problem of extending the cumulative information 
generating function to higher dimensions.

\medskip\noindent
{\bf Mathematics Subject Classification}: 94A17, 60E15, 62N05
}

\medskip
\noindent{\bf Keywords:}  
Entropy; Proportional hazard model; Proportional reversed hazard model;  
$k$-out-of-$n$ system; Stress-strength system; Variability measure.

\section{Introduction and background}
In recent years, there is a deep interest in proposing new measures of uncertainty, in order to respond to the increasingly diversified needs of researchers in the fields of reliability and risk analysis. At the same time, getting lost in the vast sea of the new notions is relatively easy. 
To meet these needs, in this paper we aim to propose a new generating function which is able to recover the cumulative residual entropy and the cumulative entropy, as well as both their generalized and  fractional extensions. 
\par 

If $X$ is a nonnegative absolutely continuous random variable having support $(0,r)$, with $r\in(0,+\infty]$, 
and probability density function (PDF) $f$,  the \emph{differential entropy} of $X$ is defined as (see, for instance, Cover and Thomas \cite{CoverThomas})
\begin{equation}\label{differential entropy}
	\mathcal{H}(X)= \mathbb{E}\left[-\log  f(X)\right]= -\int_{0}^{r} f(x)\log f(x)\,\text{d}x.
\end{equation}
Such a measure can be obtained from the \emph{information generating function}, 
defined by Golomb \cite{Golomb:1966} as
\begin{equation}\label{eq:defIGX}
	\mathcal{IG}_X(\nu) = \mathbb{E}[ (f(X))^{\nu-1} ] = \int_0^r [f(x)]^{\nu} \,\text{d}x,
\end{equation}
where $\nu\in \mathbb{R}$ is such that the right-hand-side of (\ref{eq:defIGX}) is finite. Indeed,  Eqs.\ 
(\ref{differential entropy}) and (\ref{eq:defIGX}) give:
\begin{equation}
	\mathcal{H}(X)= - \frac{\text{d}}{\text{d}\nu}  \mathcal{IG}_X(\nu)\big\vert_{\nu=1}.
	\label{eq:relHX}
\end{equation}
Recent developments and examples of applications of information generating functions 
can be found in Clark \cite{Clark:2020}, Kharazmi et al.\ \cite{kharazmi:etal}, and Kharazmi and Balakrishnan \cite{khar:balak}. 
\par	
We remark that the differential entropy can take negative values, whereas the Shannon entropy of a discrete distribution 
is  nonnegative. To avoid this drawback, and for other reasons as mentioned in Rao et al.\ \cite{rao:etal}, 
various alternative measures have been proposed  recently. 
Table \ref{Entropies} shows some information measures 
for a random variable $X$ having support $(0,r)$, with $r\in(0,+\infty]$, 
having respectively cumulative distribution function  (CDF)  and survival function (SF) given by 
$$
F(x)=\mathbb{P}(X\le x), \qquad 
\overline{F}(x)= 1-F(x). 
$$
We  remark that the entropies presented in Table \ref{Entropies} can also be expressed in terms of the cumulative 
hazard rate and the cumulative reversed hazard rate of $X$, defined respectively as
\begin{equation}
	\Lambda(x)=-\log\overline{F}(x),  
	\qquad 
	T(x)=-\log F(x).
	\label{eq:defLaTx}
\end{equation}
These functions are involved in the  \emph{cumulative residual entropy} introduced in \cite{rao:etal}, 
and the \emph{cumulative entropy} (see Di Crescenzo and Longobardi \cite{dicre:longo2}, \cite{dicre:longo}), 
given respectively in cases (i) and (ii) of Table \ref{Entropies}. 
Such measures are obtained by replacing the PDF in (\ref{differential entropy}) with the SF and the CDF, respectively.
This preserves the fact that the logarithm of the probability of an event represents the information contained in the event, 
in accordance with the Shannon entropy in the discrete case. 
%
\begin{table}[t]
\caption{Information measures of interest, for a given random lifetime $X$ with support $(0,r)$ where $r\in(0,+\infty]$, with $n\in\mathbb{N}_0$ for case (iii), $n\in\mathbb{N}$ for case (iv), $\nu\ge0$ for case (v) and $\nu>0$ for case (vi).\label{Entropies}}
		\scalebox{0.85}{
			\begin{tabular}{l@{\hspace{0,5cm}}l}
				\toprule
				(i) cumulative residual entropy& (ii) cumulative entropy\vspace{0.3cm}\\
				$\mathcal{CRE}(X)=-\int_{0}^{r}\overline{F}(x)\log\overline{F}(x)\,\text{d}x$ 
				&$\mathcal{CE}(X)
				=-\int_{0}^{r}{F}(x)\log{F}(x) \,\text{d}x$
				\\
				\midrule
				(iii) generalized cumulative residual entropy&(iv) generalized cumulative entropy
				\vspace{0.3cm}\\
				$\mathcal{CRE}_n(X)=\frac{1}{n!}\int_{0}^{r}\overline{F}(x)\left[-\log\overline{F}(x)\right]^n \text{d}x$&$\mathcal{CE}_n(X)=\frac{1}{n!}\int_{0}^{r}F(x)\left[-\log F(x)\right]^n\, \text{d}x$\\
				\midrule
				(v) generalized fractional cumulative residual entropy&(vi) generalized fractional cumulative entropy\vspace{0.3cm}\\
				$\mathcal{CRE}_\nu(X)=\frac{1}{\Gamma(\nu+1)}\int_{0}^{r}\overline{F}(x)[-\log \overline{F}(x)]^\nu \,\text{d}x$&$	\mathcal{CE}_\nu(X)=\frac{1}{\Gamma(\nu+1)}\int_{0}^{r}F(x)[-\log F(x)]^\nu \,\text{d}x$\\
				\bottomrule
			\end{tabular}
	}
\end{table}
\par
Both the cumulative residual entropy and the cumulative entropy take nonnegative values, vanishing only in the case of degenerate random variables.
These measures are particularly suitable for describing information in problems related to reliability theory, where $X$ denotes the random lifetime of an item, and $x$ is the reference time. 
In particular, in Table \ref{Entropies}, the cumulative residual entropy (i) and the generalized versions 
(iii) and (v) deal with events for which the uncertainty is related to the future, 
while the cumulative entropies (ii), (iv) and (vi) are suitable to quantify the information 
when the uncertainty is related to the past. In addition, Asadi and Zohrevand \cite{Asadi:2007} 
showed that $\mathcal{CRE}(X)=\mathbb{E}[\text{mrl}(X)]$, where mrl$(X)$ is the mean residual life of $X$. Also, 
in \cite{dicre:longo} it is shown that $\mathcal{CE}(X)=\mathbb{E}[\tilde{\mu}(X)]$, where $\tilde{\mu}(X)$ is the mean inactivity time of $X$. 
Moreover, other applications of these information measures can be found in Risk Theory, since the risk is strictly related 
to the notion of uncertainty, see e.g.\ Dulac and Simon \cite{dulac}. 
\par
Recently, Psarrakos and Navarro \cite{Psarrakos:Navarro} introduced the \emph{generalized cumulative residual entropy of order $n$} of $X$, 
defined as in (iii) of Table \ref{Entropies}, in order to extend the cumulative residual entropy.
A dual information measure, known as \emph{generalized cumulative entropy of order $n$}, was proposed by Kayal \cite{Kayal:2016} 
(cf.\ case (iv) of Table \ref{Entropies}). Various results on these generalized measures have been studied by Toomaj and Di Crescenzo in \cite{dicre:toomaj}. 
In particular,  the measures given in  (iii) and (iv) of Table \ref{Entropies} play a role in the theory of point processes. 
Indeed, the generalized cumulative residual entropy of order $n$, say $\mathcal{CRE}_n(X)$, is equal to the mean of the 
$(n+1)$-th interepoch interval of a non-homogeneous Poisson process having cumulative intensity function given by the first of (\ref{eq:defLaTx}). 
Similarly, the generalized cumulative entropy of order $n$ can be viewed as an expected spacing in lower record values (see, 
for instance, Section 6 of \cite{dicre:toomaj}). 
They are also related to the upper and  lower record values densities (see, for instance, 
Kumar and Dangi \cite{KumarDangi:2022}). 
\par
Fractional versions of the above measures have been studied as well, with the aim of disposing with more advanced
mathematical tools  to handle complex systems and anomalous dynamics. Specifically, see Xiong et al.\ \cite{Xiong:2019} 
and Di Crescenzo et al.\ \cite{dicre:meoli}, for the \emph{fractional generalized cumulative residual entropy} 
and the \emph{fractional generalized cumulative entropy} of $X$, 
given respectively in cases (v) and  (vi) of Table \ref{Entropies}. 
Certain features of fractional calculus allow these measures to better capture 
long-range phenomena and nonlocal dependence in some random systems.
\par
The entropies considered in Table \ref{Entropies} deal with nonnegative 
random variables since they are often referred to random lifetimes of interest in reliability theory. 
However they can be straightforwardly extended to the case when $X$ 
has a general support contained in $(l,r)$, with $-\infty\leq l<r\leq +\infty$. 
\par
The aim of this paper is to propose and study a new information generating function that, in analogy with 
the functions in Eqs.\ 
(\ref{eq:defIGX}) and (\ref{eq:relHX}), is able to recover the information measures presented in Table \ref{Entropies}.  
It is defined as the integral of the product between suitable powers of the CDF  and the SF. 
Throughout the paper it emerges that some advantages related to the use of the new  generating function are: 
\\
-- \ the  convenience of gaining information from both the CDF  and the SF of the random variable under investigation,
\\
-- \ the existence of suitable applications to notions of interest in reliability theory such as proportional hazards, odds function, 
order statistics, $k$-out-of-$n$ systems, and stress-strength models for multi-component systems, 
\\
-- \ the possibility of using it as a measure of concentration, since it is an extension of the Gini mean semi-difference. 
\par
With reference to the latter statement, we will show also that the proposed generating function can be  extended 
(i) by replacing the powers of the CDF  and the SF with suitable distortion functions, and (ii) by defining a weighted 
version based on a mixture distribution. Moreover, it is worth mentioning that the proposed generating function 
and its generalized versions can be applied in risk analysis. Indeed, we show that they are proper variability measures.  
 
\subsection{Plan of the paper}
In Section \ref{section CIGF} we define the new generating function, named cumulative information generating function. 
We show that it is useful to recover the measures given in Table \ref{Entropies}. 
Moreover, we illustrate the effect of an affine transformation of the considered random variable, and provide 
some connections with the proportional  hazard model, the proportional reversed hazard model, and 
the odds function. 
\par
In Section \ref{sect:inequalities} we use various well-known inequalities in order to obtain some bounds for the 
cumulative information generating function. In addition, we show how the cumulative information generating function is related (i) to the Euler beta function, 
and (ii) to the Golomb's information generating function of the equilibrium random variable. 
\par
In Section \ref{section connections} we discuss the connections with some notions of systems reliability, 
as series and parallel systems, and $k$-out-of-$n$ systems, also with special attention to   
the reliability of the multi-component stress-strength system. 
\par
In Section \ref{sect:Generalizations}  we introduce the above mentioned generalized information measures, 
named `{\bf q}-distorted Gini function' and  `weighted {\bf q}-distorted Gini function', being related also to the Gini mean semi-difference. 
We also prove that they are suitable variability measures, since in particular the dispersive order between pairs of random variables 
implies the ordering between these functions. An application to   the reliability of  
multi-component stress-strength systems is  provided, too. 
\par
The Section \ref{section bidimensional CIGF} is concerning the extension of the 
cumulative information generating function to the case of a two-dimensional random vector, with special 
care to the case of independent components. 
\par
Some final remarks are then given in Section \ref{sect:final}. 
\par
Throughout the paper, 
the terms increasing and decreasing are used in non-strict sense, 
$\mathbb{N}$ denotes the set of positive integers, and $\mathbb{N}_0=\mathbb{N}\cup\{0\}$. 
Moreover, given a distribution function $F(x)$, we denote the right-continuous version of its inverse by 
$F^{-1}(u)=\sup\{x\colon F(x)\leq u\}$, $u\in [0,1]$, which is  also named  {\em quantile function} 
in statistical framework. 
%

\section{Cumulative information generating function}\label{section CIGF}
In the same spirit of Eq.\ (\ref{eq:defIGX}) we now introduce a new generating function 
which allows to measure the cumulative information coming both from the CDF and the SF.
\begin{definition}\label{def G_X}
	Let $X$ be a random variable with CDF $F(x)$ and SF $\overline{F}(x)$, $x\in\mathbb{R}$, and let  
	\begin{equation}\label{eq:deflr}
		l=\inf\{x\in\mathbb{R} : F(x)>0\}, \qquad r=\sup\{x\in\mathbb{R} : \overline{F}(x)>0\}
	\end{equation}
denote respectively the lower and upper limits of the support of $X$ (which may be finite or infinite). 
	The cumulative information generating function (CIGF) of $X$ is defined as
	\begin{equation}\label{G_X}
		\begin{aligned}
			G_X:&\:D_X\subseteq \mathbb{R}^{2} \longrightarrow [0,+\infty)\\
			&\;\;(\alpha,\beta)\quad\;\longmapsto \;G_X(\alpha,\beta)
			=\int_{l}^{r}\left[F(x)\right]^{\alpha}\left[\overline{F}(x)\right]^{\beta}\,{\rm{d}}x
		\end{aligned}
	\end{equation}
	where 
	$$
	D_X=\{(\alpha,\beta)\in\mathbb{R}^2 :   G_X(\alpha,\beta) <+\infty\}.
	$$
\end{definition}
Clearly, one has $\mathbb{P}(l\le X\le r)=1$. Moreover, if $X$ is degenerate then $G_X(\alpha,\beta)=0$, otherwise 
$G_X(\alpha,\beta)>0$. 
\begin{example}\label{example Erlang}
	Let $X\sim Erlang(2,\lambda)$, with $\lambda>0$, and  CDF  	
$F(x)=1-e^{-\lambda x}-\lambda xe^{-\lambda x}$, $x\ge0$. From 
	%
(\ref{G_X}), recalling that $\forall x : \vert x\vert<1$,
	\begin{equation}\label{useful for applications}
		(1+x)^{\alpha}=\sum_{n=0}^{+\infty}\binom{\alpha}{n}x^n, 
		\qquad \hbox{with }\; \binom{\alpha}{n}= \frac{\alpha(\alpha-1) \cdots (\alpha-n+1)}{n!},
	\end{equation}
	%
	and taking into account that 
	\begin{equation*}
		\int_{0}^{+\infty}e^{-(n+\beta)\lambda x}(1+\lambda x)^{n+\beta}\,\text{d}x
		=\frac{1}{\lambda}\,\Gamma(n+\beta+1,n+\beta)\,e^{(n+\beta)}(n+\beta)^{-(n+\beta+1)},
	\end{equation*} 
with $D_X=\{(\alpha,\beta)\in\mathbb{R}^2: \beta\in\mathbb{R}\setminus \mathbb{Z}_0^-\}$,
	we obtain the CIGF of $X$ in series form:
	\begin{equation}\label{CIGF of Erlang distribution}
		G_X(\alpha,\beta)=\frac{1}{\lambda}\sum_{n=0}^{+\infty}\binom{\alpha}{n}(-1)^n\,\Gamma(n+\beta+1,n+\beta)\,e^{n+\beta}\,(n+\beta)^{-(n+\beta+1)}.
	\end{equation}
\end{example}
\par 
We remark that if $X$ is a discrete random variable with finite support $\{x_{1}\leq x_{2}\leq \ldots \leq x_{n}\}$,
then due to (\ref{G_X}) the CIGF can be expressed as a sum, i.e. 
	$$
	G_X(\alpha,\beta)=\sum_{k=1}^{n-1} \left(P_k\right)^{\alpha}\left(1-P_k\right)^{\beta} (x_{k+1}-x_k), 
	\qquad \hbox{where \ } P_k=\sum_{i=1}^{k}\mathbb P(X=x_i). 
	$$
Other examples will be illustrated below. 
\par
In the next theorem we show the effect of an affine transformation. 
The result follows from Definition \ref{def G_X}, and recalling the relation between the CDFs of $X$ and $Y=\gamma X + \delta$. 
\begin{theorem}\label{theorematransformation}
	Let $X$ be a random variable with finite CIGF. 
	Consider the affine transformation $Y=\gamma X + \delta$, with $\gamma\in\mathbb{R}\setminus\{0\}$, $\delta\in\mathbb{R}$. Then
	\begin{equation}\label{Glineartrasformation}
		G_Y(\alpha,\beta)=
		\begin{cases}
			\gamma\,G_X(\alpha,\beta), & \text{$0<\gamma<\infty$} \\
			\vert\gamma\vert \, G_X(\beta,\alpha), & \text{$-\infty<\gamma<0$} 
		\end{cases}	
	\end{equation}
	for $(\alpha,\beta)\in D_X$ if $0<\gamma<\infty$, and 
	$(\beta,\alpha)\in D_X$ if $-\infty<\gamma<0$. 
\end{theorem}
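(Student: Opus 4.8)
The plan is to compute $G_Y(\alpha,\beta)$ directly from Definition \ref{def G_X} by first expressing the CDF and SF of $Y=\gamma X+\delta$ in terms of those of $X$, and then carrying out the linear change of variable $x=(y-\delta)/\gamma$. The whole argument rests on two elementary ingredients: how the distribution function transforms under an affine map, and how the limits of integration together with the Jacobian behave according to the sign of $\gamma$. I would split the proof into the two sign cases, since the transformation of the CDF is qualitatively different in each.

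First I would treat the case $0<\gamma<\infty$. Here the event $\{\gamma X+\delta\le y\}$ coincides with $\{X\le (y-\delta)/\gamma\}$, so that $F_Y(y)=F_X((y-\delta)/\gamma)$ and correspondingly $\overline{F}_Y(y)=\overline{F}_X((y-\delta)/\gamma)$. Substituting these into the definition (\ref{G_X}) and setting $x=(y-\delta)/\gamma$, so that $\text{d}y=\gamma\,\text{d}x$ and the support $(\gamma l+\delta,\gamma r+\delta)$ of $Y$ is mapped onto $(l,r)$ preserving orientation, the integrand reduces to $[F_X(x)]^{\alpha}[\overline{F}_X(x)]^{\beta}$ and the constant factor $\gamma$ comes out of the integral, yielding $G_Y(\alpha,\beta)=\gamma\,G_X(\alpha,\beta)$. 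Finiteness for $(\alpha,\beta)\in D_X$ is inherited at once, giving the first branch of (\ref{Glineartrasformation}).

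The only delicate point is the case $-\infty<\gamma<0$. Now $\{\gamma X+\delta\le y\}$ is equivalent to $\{X\ge (y-\delta)/\gamma\}$, so the roles of the CDF and the SF get interchanged: $F_Y(y)=\overline{F}_X((y-\delta)/\gamma)$ and $\overline{F}_Y(y)=F_X((y-\delta)/\gamma)$, where absolute continuity guarantees that the atom at $(y-\delta)/\gamma$ is negligible. The same substitution $x=(y-\delta)/\gamma$ now reverses the orientation of the support (the lower limit $\gamma r+\delta$ of $Y$ maps to $x=r$ and the upper limit $\gamma l+\delta$ maps to $x=l$) and simultaneously produces the negative Jacobian $\text{d}y=\gamma\,\text{d}x$. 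These two sign reversals combine: flipping the limits of integration cancels one sign and leaves $|\gamma|=-\gamma$ as the overall prefactor.

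Putting this together, the transformed integral becomes $|\gamma|\int_{l}^{r}[\overline{F}_X(x)]^{\alpha}[F_X(x)]^{\beta}\,\text{d}x$, and recognizing the integral on the right as $G_X$ with its two arguments swapped gives $G_Y(\alpha,\beta)=|\gamma|\,G_X(\beta,\alpha)$ for $(\beta,\alpha)\in D_X$, which is the second branch. The hard part is thus purely bookkeeping: keeping the orientation of the integration limits consistent with the sign of the Jacobian when $\gamma<0$, and noticing that the interchange of $F_X$ and $\overline{F}_X$ forced by the reversed inequality is exactly what turns $G_X(\alpha,\beta)$ into $G_X(\beta,\alpha)$.
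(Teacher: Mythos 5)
Your overall route is exactly the one the paper has in mind: its proof of Theorem \ref{theorematransformation} is a one-line appeal to Definition \ref{def G_X} together with the relation between the CDFs of $X$ and $Y=\gamma X+\delta$, and your two sign cases, the change of variable $x=(y-\delta)/\gamma$, and the cancellation of the orientation reversal against the negative Jacobian for $\gamma<0$ are precisely the bookkeeping that one-liner suppresses. The computation itself is correct in both branches, as is the observation that the reversed inequality swaps the roles of $F$ and $\overline{F}$ and hence swaps the arguments $(\alpha,\beta)$.

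There is, however, one unjustified step in the case $\gamma<0$: you write that ``absolute continuity guarantees that the atom at $(y-\delta)/\gamma$ is negligible,'' but absolute continuity is not a hypothesis of the theorem. The statement concerns an arbitrary random variable with finite CIGF, and the paper deliberately covers non-continuous laws (see the Bernoulli entry of Table \ref{tableCIGF} and the discrete-sum expression for $G_X$ given after Example \ref{example Erlang}). For general $X$ the pointwise identities you use are false at atoms: one has $F_Y(y)=\mathbb{P}\bigl(X\ge (y-\delta)/\gamma\bigr)$, which differs from $\overline{F}_X\bigl((y-\delta)/\gamma\bigr)$ by $\mathbb{P}\bigl(X=(y-\delta)/\gamma\bigr)$, and similarly $\overline{F}_Y(y)$ differs from $F_X\bigl((y-\delta)/\gamma\bigr)$; e.g.\ for $Y=-X$ with $X\sim\mathrm{Bernoulli}(p)$ the identity fails at $y=0$. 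The correct repair is cheap and should replace your clause: a distribution function has at most countably many atoms, so the set of $y$ at which the identities fail is countable, hence Lebesgue-null, and therefore does not affect the integral (\ref{G_X}) defining $G_Y$. With that substitution your argument proves the theorem in the stated generality, including the finiteness claim for $(\beta,\alpha)\in D_X$ when $\gamma<0$.
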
	
\begin{remark}
	If $X$ is absolutely continuous, with PDF $f(x)$, by setting $u=F(x)$ in the right-hand-side of 
	Eq.\ (\ref{G_X}), the CIGF of $X$ can be expressed as
	\begin{equation}\label{CIGF in Remark 1}
		G_X(\alpha,\beta)=\int_{0}^{1} u^{\alpha}(1-u)^{\beta}\frac{1}{f(F^{-1}(u))}\,{\rm d}u,
		\qquad (\alpha,\beta)\in D_X.
	\end{equation}
\end{remark}

\begin{remark}\label{rem:Gini}
		The CIGF of a nonnegative random variable $X$
		can be regarded as a measure of concentration. Indeed, if $ X' $ is an independent copy of $ X $, 
		we can deduce that (see, for instance,  the proof of Proposition 1 of Rao \cite{rao:05}) 
\begin{equation}
			\label{eq:GiniX}
			G_X(1,1)=\int_{l}^{r}F(x)\overline{F}(x)\,{\rm{d}}x= \frac12\, \mathbb{E}\left[\vert X-X'\vert\right]
			=Gini(X).
\end{equation}
		This quantity is also known as the Gini mean semi-difference,   which 
		represents an example of coherent measure of variability with comonotonic additivity 
		(see Section 2.2 of Hu and Chen \cite{Hu:Chen}).
\end{remark}

Similarly as the information generating function defined in (\ref{eq:defIGX}), 
the following generating measures can be introduced as marginal versions of the CIGF. 
\begin{definition}\label{remark H}
	Under the same assumptions of Definition \ref{def G_X}, 
	the cumulative information generating measure and 
	the cumulative residual information generating measure 
	are defined respectively by 
	\begin{equation}\label{H_X}
		H_X(\alpha) \equiv G_X(\alpha,0)=\int_{l}^{r}\left[F(x)\right]^\alpha\,{\rm{d}}x,
		\qquad \forall \; (\alpha,0)\in D_X
	\end{equation}
	and 
	\begin{equation}\label{CIG}
		K_X(\beta)\equiv G_X(0,\beta)=\int_{l}^{r}\left[\overline{F}(x)\right]^\beta\,{\rm{d}}x,
		\qquad \forall \; (0,\beta)\in D_X.
	\end{equation}
\end{definition}
We remark that when $X$ is absolutely continuous with support $(0,\infty)$, 
the measure (\ref{CIG}) has been introduced in Eq.\ (10) of  
Kharazmi and Balakrishnan \cite{kharazmi:balakrishnan}, denoted as $\mathcal{CIG}_\beta(\overline{F})$, 
for $\beta>0$. 
Under these assumptions, other properties and the non-parametric estimation of 
the function given in Eq.\ (\ref{CIG}) have been studied in Smitha et al.\ \cite{Smitha:etal}.

%

\begin{remark}\label{rem:symm}
	If $X$ is a random variable with finite CIGF and symmetric CDF, 
	in the sense that  for some $m\in\mathbb{R}$  one has
	$F(m+x)=\overline{F}(m-x)$ $ \forall \,x\in\mathbb{R}$, then
\\
(i) 	$G_X(\alpha,\beta)=G_X(\beta,\alpha)$  for all $(\alpha,\beta)\in D_X$; 
 \\
(ii) from Eqs.\ (\ref{H_X}) and (\ref{CIG})  we have 
	$H_X(\alpha)=K_X(\alpha)$  for all $(\alpha,0)\in D_X$; 
\\	
(iii) 
under the assumptions of Theorem \ref{theorematransformation}, 
Eq.\ (\ref{Glineartrasformation}) becomes 
	$ G_Y(\alpha,\beta)=\vert\gamma\vert\,G_X(\alpha,\beta)$, $  \gamma\in \mathbb R$.
\end{remark}
Recalling the measures (i) and (ii) of Table \ref{Entropies}, now 
we can show that the cumulative residual entropy and the cumulative entropy 
can be obtained from the CIGF.
\begin{proposition}\label{prop CE_X and CRE_X from G_X}
	Let $X$ be a random variable having finite CIGF $G_X(\alpha,\beta)$, for $(\alpha,\beta)\in D_X$. 
	If $(0,1)\in D_X$, then 
	\begin{equation*}
		\mathcal{CRE}(X)= - \frac{\partial}{\partial\beta}G_X(\alpha,\beta)\Big\vert_{\alpha=0,\beta=1}.
	\end{equation*}	
	If $(1,0)\in D_X$, then 
	\begin{equation*}
		\mathcal{CE}(X)= - \frac{\partial}{\partial\alpha}G_X(\alpha,\beta)\Big\vert_{\alpha=1,\beta=0}.
	\end{equation*}
\end{proposition}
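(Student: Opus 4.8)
The plan is to obtain both identities by differentiating the integral in (\ref{G_X}) under the integral sign: once in $\beta$ to recover $\mathcal{CRE}(X)$, and once in $\alpha$ to recover $\mathcal{CE}(X)$. The pointwise derivatives of the integrand are immediate. On the interval $(l,r)$ one has $0<F(x)<1$ and $0<\overline{F}(x)<1$ by the definition (\ref{eq:deflr}) of $l$ and $r$, so writing $[\overline{F}(x)]^{\beta}=\exp\{\beta\log\overline{F}(x)\}$ and $[F(x)]^{\alpha}=\exp\{\alpha\log F(x)\}$ gives
\begin{equation*}
	\frac{\partial}{\partial\beta}\,[\overline{F}(x)]^{\beta}=[\overline{F}(x)]^{\beta}\log\overline{F}(x),
	\qquad
	\frac{\partial}{\partial\alpha}\,[F(x)]^{\alpha}=[F(x)]^{\alpha}\log F(x).
\end{equation*}
First I would record these, noting that setting the other exponent to zero produces a unit factor $[F(x)]^{0}=1$ or $[\overline{F}(x)]^{0}=1$.

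Granting the interchange of derivative and integral, evaluation at the prescribed points is then a one-line computation. For the first claim,
\begin{equation*}
	\frac{\partial}{\partial\beta}G_X(\alpha,\beta)\Big\vert_{\alpha=0,\beta=1}
	=\int_{l}^{r}\overline{F}(x)\log\overline{F}(x)\,{\rm d}x
	=-\,\mathcal{CRE}(X),
\end{equation*}
and for the second,
\begin{equation*}
	\frac{\partial}{\partial\alpha}G_X(\alpha,\beta)\Big\vert_{\alpha=1,\beta=0}
	=\int_{l}^{r}F(x)\log F(x)\,{\rm d}x
	=-\,\mathcal{CE}(X),
\end{equation*}
matching cases (i) and (ii) of Table \ref{Entropies} (with $l=0$ in the nonnegative case). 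Negating both displays yields the two stated formulas.

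The hard part will be justifying the interchange of $\partial_\beta$ (resp.\ $\partial_\alpha$) with the integral; everything else is routine. I would do this by the Leibniz rule via dominated convergence, which requires an integrable majorant of the differentiated integrand, uniform for the differentiation variable in a neighborhood of its evaluation point. The key elementary estimate is that, for $t\in(0,1]$ and any $\delta>0$, one has $|\log t|\le (\delta e)^{-1}t^{-\delta}$, whence $t^{\beta}|\log t|\le(\delta e)^{-1}t^{\beta-\delta}$. Taking $t=\overline{F}(x)$ and letting $\beta$ range over a small interval $[1-\eta,1+\eta]$, the majorant $(\delta e)^{-1}[\overline{F}(x)]^{1-\eta-\delta}$ controls the integrand near $x=r$ (where $\overline{F}\to0$), and it is integrable precisely when $(0,1-\eta-\delta)\in D_X$; near $x=l$ the factor $\log\overline{F}(x)\to0$ keeps the integrand bounded. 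Thus the argument goes through whenever $(0,1)$ lies in the interior of $D_X$, so that $\eta,\delta$ can be chosen with $(0,1-\eta-\delta)\in D_X$; for a boundary point one would instead argue directly on the difference quotient by a one-sided monotone-convergence estimate. The $\alpha$-derivative is handled symmetrically, with $F$ in place of $\overline{F}$ and the roles of the two exponents exchanged.
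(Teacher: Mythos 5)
Your proposal is correct and follows exactly the paper's route: the paper's proof is the single line ``the stated results follow from Eq.\ (\ref{G_X}), by differentiation under the integral sign,'' which is precisely your computation. The only difference is that you supply the justification the paper leaves implicit — the majorant $t^{\beta}\vert\log t\vert\le(\delta e)^{-1}t^{\beta-\delta}$ for dominated convergence when $(0,1)$ is interior to $D_X$, and the monotone difference-quotient argument at a boundary point — both of which are sound.
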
 
\begin{proof} 
	The stated results follow from Eq.\ (\ref{G_X}), by differentiation under the integral sign. 
\end{proof}	
Let us now obtain a similar relation for the 
generalized cumulative residual entropy and the generalized cumulative entropy 
(cf.\ cases (iii) and (iv) of Table \ref{Entropies}).
\begin{proposition}\label{prop CE_n and CRE_n from G_X}
	Let $X$ be a random variable having finite CIGF $G_X(\alpha,\beta)$, for $(\alpha,\beta)\in D_X$. 
	If $(0,1)\in D_X$, then 
	\begin{equation}\label{CRE_n from G_X}
		\mathcal{CRE}_n(X)
		= \frac{(-1)^n}{n!} \frac{\partial^n}{\partial\beta^n}G_X(\alpha,\beta)\Big\vert_{\alpha=0,\beta=1},
		\qquad n\in \mathbb{N}_0.
	\end{equation}	
	If $(1,0)\in D_X$, then 
	\begin{equation}\label{CE_n from G_X}
		\mathcal{CE}_n(X)
		= \frac{(-1)^n}{n!} \frac{\partial^n}{\partial\alpha^n}G_X(\alpha,\beta)\Big\vert_{\alpha=1,\beta=0},
		\qquad n\in \mathbb{N}.		
	\end{equation}
\end{proposition}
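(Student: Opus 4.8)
The plan is to mirror the one-step argument behind Proposition~\ref{prop CE_X and CRE_X from G_X}, differentiating the integral representation (\ref{G_X}) under the integral sign, but now carrying out $n$ successive differentiations. The key elementary identity is that, writing $[\overline{F}(x)]^{\beta}=\exp\{\beta\log\overline{F}(x)\}$, each application of $\partial_{\beta}$ brings down one factor $\log\overline{F}(x)$, so that
\begin{equation*}
\frac{\partial^{n}}{\partial\beta^{n}}[\overline{F}(x)]^{\beta}=[\log\overline{F}(x)]^{n}[\overline{F}(x)]^{\beta},\qquad
\frac{\partial^{n}}{\partial\alpha^{n}}[F(x)]^{\alpha}=[\log F(x)]^{n}[F(x)]^{\alpha}.
\end{equation*}

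First I would interchange $\partial_{\beta}^{n}$ with the integral in (\ref{G_X}) and then set $\alpha=0$, $\beta=1$, which gives
\begin{equation*}
\frac{\partial^{n}}{\partial\beta^{n}}G_X(\alpha,\beta)\Big\vert_{\alpha=0,\beta=1}
=\int_{l}^{r}[\log\overline{F}(x)]^{n}\,\overline{F}(x)\,\mathrm{d}x.
\end{equation*}
Using $[\log\overline{F}(x)]^{n}=(-1)^{n}[-\log\overline{F}(x)]^{n}$ and comparing with case (iii) of Table~\ref{Entropies}, the right-hand side is exactly $(-1)^{n}\,n!\,\mathcal{CRE}_n(X)$, and solving for $\mathcal{CRE}_n(X)$ yields (\ref{CRE_n from G_X}). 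The companion identity (\ref{CE_n from G_X}) follows by the symmetric computation: differentiating $[F(x)]^{\alpha}$ in $\alpha$ produces $[\log F(x)]^{n}$, and evaluating at $\alpha=1$, $\beta=0$ reproduces $(-1)^{n}\,n!\,\mathcal{CE}_n(X)$ from case (iv) of Table~\ref{Entropies}. The admissible ranges $n\in\mathbb{N}_0$ and $n\in\mathbb{N}$ are inherited directly from the definitions of the two measures in Table~\ref{Entropies}; in particular the case $n=0$ of the first identity is the trivial statement $\mathcal{CRE}_0(X)=G_X(0,1)$, which is precisely the hypothesis $(0,1)\in D_X$.

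The main obstacle is not the algebra but the justification of the $n$-fold interchange of differentiation and integration, i.e.\ the verification of the Leibniz rule. I would establish it by exhibiting an integrable majorant for the $\beta$-derivatives of the integrand on a two-sided neighborhood of the evaluation point. After fixing $\alpha=0$, the only delicate factor is $[\log\overline{F}(x)]^{n}$ near the upper endpoint, where $\overline{F}(x)\to0$; since $u^{\epsilon}\lvert\log u\rvert^{n}$ is bounded on $(0,1]$ for every $\epsilon>0$, for $\beta$ in a small interval $[1-\delta,1+\delta]$ one obtains the pointwise bound $[\overline{F}(x)]^{\beta}\lvert\log\overline{F}(x)\rvert^{n}\le C\,[\overline{F}(x)]^{1-2\delta}$. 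Hence an integrable dominating function over $(l,r)$ exists as soon as $G_X(0,1-2\delta)<+\infty$, i.e.\ as soon as $G_X$ is finite at a point of $D_X$ slightly below the evaluation point, with the symmetric requirement $G_X(1-2\delta,0)<+\infty$ governing the $\mathcal{CE}_n$ case. Under this mild strengthening of the stated hypotheses the dominated convergence theorem applies, the interchange is legitimate, and the target integral is simultaneously seen to be finite, exactly in the spirit of the single-derivative argument already invoked in Proposition~\ref{prop CE_X and CRE_X from G_X}.
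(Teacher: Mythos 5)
Your proof follows essentially the same route as the paper's: the paper also proves (\ref{CE_n from G_X}) by $n$-fold differentiation under the integral sign via the identity $\frac{\partial^n}{\partial \alpha^n}G_X(\alpha,\beta)=\int_{l}^{r}\left(\log F(x)\right)^n\left[F(x)\right]^\alpha\left[\overline{F}(x)\right]^\beta\,\mathrm{d}x$, evaluated at $(\alpha,\beta)=(1,0)$, with (\ref{CRE_n from G_X}) obtained symmetrically. Your additional dominated-convergence justification of the interchange (with the mild strengthening $G_X(0,1-2\delta)<+\infty$, resp.\ $G_X(1-2\delta,0)<+\infty$) is extra rigor that the paper leaves implicit, not a different method.
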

\begin{proof}
	The proof of (\ref{CE_n from G_X}) is analogous to Proposition \ref{prop CE_X and CRE_X from G_X}, by using this identity:
	\begin{equation}\label{useful for fractional case}
		\frac{\partial^n}{\partial \alpha^n}G_X(\alpha,\beta)
		=\int_{l}^{r}\left(\log F(x) \right)^n\left[F(x)\right]^\alpha\left[\overline{F}(x)\right]^\beta\,\text{d}x.
	\end{equation}
	In the same way we obtain Eq.\ (\ref{CRE_n from G_X}).
\end{proof}
In order to extend the above relations to the case of the generalized fractional cumulative residual entropy
and the generalized fractional cumulative entropy, given respectively  in cases (v) and (vi) of Table \ref{Entropies}, 
let us now recall briefly the expression of the Caputo fractional derivatives 
(see, for instance, Kilbas et al.\ \cite{North:Holland}).  
Specifically, given a function $y(x_1,x_2)$, we consider 
\begin{equation*}
	\left(\prescript{C}{}{D_{-,x_1}^\nu y}\right)(x_1,x_2)=\frac{(-1)^n}{\Gamma(n-\nu)}\int_{x_1}^{+\infty}\frac{y^{(n)}(t,x_2)}{(t-x_1)^{\nu+1-n}}\, \text{d}t, 
	\qquad x_1,x_2\in\mathbb{R},
\end{equation*}
that is the \emph{left-sided Caputo partial fractional derivative with respect to $x_1$ of order $\nu$} on the whole axis $\mathbb{R}$, where $\nu\in\mathbb{C}$ with $\text{Re}(\nu)>0$, $\nu\notin\mathbb{N}$ and $n=\lfloor \text{Re}(\nu)\rfloor + 1$.
\begin{proposition}\label{prop fractional CRE CE from CIGF }
	Let $X$ be a random variable having finite CIGF $G_X(\alpha,\beta)$, for $(\alpha,\beta)\in D_X$. 
	If $(0,1)\in D_X$, then 
	\begin{equation}\label{CRE_n fractional from G}
		\mathcal{CRE}_\nu(X)=\frac{1}{\Gamma(\nu+1)}\left(\prescript{C}{}{D_{-,\beta}^\nu G_X}\right)(\alpha,\beta)\Big\vert_{\alpha=0,\beta=1},
		\qquad \nu>0.
	\end{equation}
	If $(1,0)\in D_X$, then 	
	\begin{equation}\label{CE_n fractional from G}
		\mathcal{CE}_\nu(X)=\frac{1}{\Gamma(\nu+1)}\left(\prescript{C}{}{D_{-,\alpha}^\nu G_X}\right)(\alpha,\beta)\Big\vert_{\alpha=1,\beta=0},
		\qquad \nu>0.
	\end{equation}
\end{proposition}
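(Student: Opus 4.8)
The plan is to reduce both identities to the integer-order formula of Proposition \ref{prop CE_n and CRE_n from G_X} and then carry out the fractional integration explicitly. I treat the cumulative-entropy identity (\ref{CE_n fractional from G}) in detail; the residual version (\ref{CRE_n fractional from G}) then follows verbatim upon replacing $\alpha$ by $\beta$ and $F$ by $\overline{F}$, and evaluating at $(0,1)$ instead of $(1,0)$.

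First I would write the Caputo derivative $(\prescript{C}{}{D_{-,\alpha}^\nu G_X})(\alpha,\beta)$ straight from its definition, with $n=\lfloor\nu\rfloor+1$, and insert the $n$-th partial derivative with respect to $\alpha$ supplied by (\ref{useful for fractional case}). Setting $\alpha=1$, $\beta=0$ produces the iterated integral $\frac{(-1)^n}{\Gamma(n-\nu)}\int_1^{+\infty}(t-1)^{n-\nu-1}\left[\int_l^r (\log F(x))^n [F(x)]^{t}\,{\rm d}x\right]\,{\rm d}t$.

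Next I would invoke Fubini's theorem to swap the two integrals and evaluate the inner one in $t$. The substitution $s=t-1$ together with $[F(x)]^{t}=F(x)\,e^{-s(-\log F(x))}$ turns it into $F(x)\int_0^{+\infty} s^{(n-\nu)-1}e^{-s(-\log F(x))}\,{\rm d}s$, which is the Euler integral for the gamma function; since $n-\nu\in(0,1)$ and $-\log F(x)>0$ wherever $0<F(x)<1$, it equals $F(x)\,\Gamma(n-\nu)\,(-\log F(x))^{-(n-\nu)}$. The factor $\Gamma(n-\nu)$ cancels the one in the Caputo definition, and writing $(\log F(x))^n=(-1)^n(-\log F(x))^n$ absorbs the sign $(-1)^n$, leaving the exponent $n-(n-\nu)=\nu$. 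Thus $(\prescript{C}{}{D_{-,\alpha}^\nu G_X})(1,0)=\int_l^r F(x)\,(-\log F(x))^{\nu}\,{\rm d}x$, and multiplying by $1/\Gamma(\nu+1)$ reproduces $\mathcal{CE}_\nu(X)$ of case (vi) in Table \ref{Entropies}.

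The hard part is the rigorous justification of the Fubini interchange together with the convergence of the inner integral at the endpoints. The kernel $(t-1)^{n-\nu-1}$ is integrable near $t=1$ precisely because $n-\nu\in(0,1)$, and $[F(x)]^{t}$ decays exponentially in $t$ whenever $F(x)<1$; at points where $F(x)\to1$ the prefactor $(\log F(x))^n$ vanishes, while where $F(x)\to0$ the product $F(x)(-\log F(x))^{\nu}$ stays bounded. Hence the double integrand is absolutely integrable once $G_X$ is finite along the relevant segment emanating from $(1,0)$, which is what already underlies the differentiation-under-the-integral identity (\ref{useful for fractional case}); the bare hypothesis $(1,0)\in D_X$ gives finiteness only at the point, so this neighbourhood control is the one delicate step. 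Once it is granted, the gamma-integral evaluation is entirely routine.
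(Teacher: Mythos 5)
Your argument is correct and is essentially the paper's own proof: both start from the identity (\ref{useful for fractional case}), interchange the order of integration by Fubini's theorem, and evaluate the inner $t$-integral as an Euler gamma integral via the substitution $t-\alpha=z$, $-z\log F(x)=\gamma$, with the $\Gamma(n-\nu)$ factors cancelling to leave $\int_l^r\left(-\log F(x)\right)^{\nu}\left[F(x)\right]^{\alpha}\left[\overline{F}(x)\right]^{\beta}\,\mathrm{d}x$ before setting $(\alpha,\beta)=(1,0)$. Your closing remarks on absolute integrability near $t=1$ and at the endpoints where $F(x)\to 0$ or $F(x)\to 1$ merely make explicit the justification the paper leaves implicit, so there is no substantive difference.
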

\begin{proof} 
	We show only the proof of Eq.\ (\ref{CE_n fractional from G}) because Eq.\ (\ref{CRE_n fractional from G}) can be derived similarly. 
	From (\ref{useful for fractional case}) we obtain
	\begin{equation*}
		\begin{aligned}
			& \left( \prescript{C}{}{D_{-,\alpha}^\nu   G_X}\right)(\alpha,\beta)=
			\\
			&=\frac{(-1)^n}{\Gamma(n-\nu)}\int_{\alpha}^{+\infty}\frac{1}{(t-\alpha)^{\nu+1-n}}\frac{\partial^n}{\partial t^n}G_X(t,\beta)\,\text{d}t\\
			&=\frac{(-1)^n}{\Gamma(n-\nu)}\int_{\alpha}^{+\infty}\frac{1}{(t-\alpha)^{\nu+1-n}}
			\left(\int_{l}^{r}\left(\log F(x)\right)^n\left[F(x)\right]^t\left[\overline{F}(x)\right]^\beta \,\text{d}x\right)\,\text{d}t\\
			&=\frac{(-1)^n}{\Gamma(n-\nu)}\int_{l}^{r}
			\left(\log F(x)\right)^n\left[\overline{F}(x)\right]^\beta
			\left(\int_{\alpha}^{+\infty}\frac{\left[F(x)\right]^t}{(t-\alpha)^{\nu+1-n}}\,\text{d}t\right)\,\text{d}x,
		\end{aligned}	
	\end{equation*}
	where the last equality is obtained by use of Fubini's theorem.
	By placing $t-\alpha=z$ and $\gamma=-z\log  F(x)$, we have
	\begin{equation*}
		\begin{aligned}
			\int_{\alpha}^{+\infty}\frac{\left[F(x)\right]^t}{(t-\alpha)^{\nu+1-n}}\, \text{d}t
			&=\left[F(x)\right]^\alpha\int_{0}^{+\infty}\left[F(x)\right]^z z^{n-\nu-1} \, \text{d}z
			\\
			&=\left[F(x)\right]^\alpha\int_{0}^{+\infty}e^{-\gamma}\gamma^{n-\nu-1}
			\left(-\log  F(x) \right)^{-n+\nu}\, \text{d}\gamma
			\\
			&=\left[F(x)\right]^\alpha\Gamma(n-\nu)\left(-\log  F(x)\right)^{-n+\nu}.
		\end{aligned}
	\end{equation*}
	Finally we deduce
	\begin{equation*}
		\begin{aligned}
			\left(\prescript{C}{}{D_{-,\alpha}^\nu G_X}\right)(\alpha,\beta)  
			=\int_{l}^{r}\left(-\log F(x)\right)^{\nu}\left[\overline{F}(x)\right]^\beta\left[F(x)\right]^\alpha\,\text{d}x, 
		\end{aligned}
	\end{equation*}
	so that Eq.\ (\ref{CE_n fractional from G}) follows by taking $\alpha=1$ and $\beta=0$.
\end{proof}
\begin{remark}\label{remark marginal}
	Recently, Kharazmi and Balakrishnan \cite{kharazmi:balakrishnan} noted that 
$\mathcal{CRE}(X)=-\frac{\text{d}}{\text{d}\beta}K_X(\beta)\big\vert_{\beta=1}$. 
	Similarly,  we obtain  
$\mathcal{CE}(X)=-\frac{{\rm d}}{{\rm d}\alpha}H_X(\alpha)\big\vert_{\alpha=1}$.
	Moreover, for  the generalized versions and for the fractional versions we have respectively
	\begin{equation*}
		\mathcal{CRE}_n(X)=\frac{(-1)^n}{n!}\frac{{\rm d}^n}{{\rm d}\beta^n}K_X(\beta)\Big\vert_{\beta=1},
		\qquad
		\mathcal{CE}_n(X)=\frac{(-1)^n}{n!}\frac{{\rm d}^n}{{\rm d}\alpha^n}H_X(\alpha)\Big\vert_{\alpha=1},
	\end{equation*}
	\begin{equation*}
		\mathcal{CRE}_\nu(X)=\frac{1}{\Gamma(\nu+1)}\left(\prescript{C}{}{D_{-}^\nu K_X}\right)(\beta)\Big\vert_{\beta=1},
		\qquad		
		\mathcal{CE}_\nu(X)=\frac{1}{\Gamma(\nu+1)}\left(\prescript{C}{}{D_{-}^\nu H_X}\right)(\alpha)\Big\vert_{\alpha=1}.
	\end{equation*}
\end{remark}
Now we recall two important models that are largely adopted in survival analysis and reliability theory.
Let $X$ be a random lifetime with CDF $F(x)$ and SF $\overline{F}(x)$. The proportional hazard model 
(see for instance Cox \cite{Cox:72}, Kumar and Klefsj\"o \cite{Kumar:1994}) is expressed by a random lifetime 
$X^*_\gamma$ with SF
\begin{equation}\label{propotional hazard model}
	\overline{F}_{X^*_\gamma}(x)=\left[\overline{F}(x)\right]^\gamma,
	\qquad
	\gamma\in\mathbb{R}^+.
\end{equation}
Similarly, the proportional reversed hazard model (see for instance Di Crescenzo \cite{dicre:2000}, 
Gupta and Gupta \cite{Gupta:2007}, Gupta et al.\ \cite{Gupta:1998}) is expressed by a random lifetime 
$\hat{X}_\theta$ with CDF
\begin{equation}\label{proportional reversed hazard model}
	F_{\hat{X}_\theta}(x)=\left[F(x)\right]^\theta, \qquad\theta\in\mathbb{R}^+.
\end{equation}
Recently, modified versions of these models have been studied by Das and Kayal \cite{Das:Kayal}.
\begin{remark}
	The measures given in Definition \ref{remark H} satisfy the following relations. 
	\begin{description}
		\item{(i)}
		Under the proportional hazard model (\ref{propotional hazard model}) we have 
		$$
		K_{X^*_\gamma}(\beta)=K_X(\gamma\beta)  
		\qquad  \forall \,\gamma\in\mathbb{R}^+ \;\; \hbox{s.t.}\;\; (0,\gamma\beta)\in D_X.
		$$
		\item{(ii)}
		Under the proportional reversed hazard model (\ref{proportional reversed hazard model}) we have 
		$$
		H_{\hat{X}_\theta}(\alpha)=H_X(\theta\alpha) 
		\qquad  \forall \,\theta\in\mathbb{R}^+ \;\; \hbox{s.t.}\;\; (\theta\alpha,0)\in D_X.
		$$   
	\end{description}
\end{remark}
\begin{remark}
	If $X$  is a random lifetime   such that $D_X\subseteq(\mathbb{R}^+)^2$, then recalling the Definition \ref{def G_X}  
	and Eqs.\ (\ref{propotional hazard model}) and   (\ref{proportional reversed hazard model}),   the CIGF of $X$ 
	can be expressed as  
	$$
	G_X(\alpha,\beta)=\int_{l}^{r}F_{\hat{X}_\alpha}(x)\,\overline{F}_{X^*_\beta}(x)\,\text{d}x. 
	$$ 
\end{remark}
\par
We now recall another  useful concept. 
Let $X$ be a random variable with CDF and SF denoted by $F(x)$ and  $\overline{F}(x)$, respectively. 
For all $x\in(l,r)$ 
the \emph{odds function} of $X$ is (cf.\ Kirmani and Gupta \cite{odds:model})
\begin{equation}\label{Odds Function}
	\theta(x)=\frac{\overline{F}(x)}{F(x)}.
\end{equation}
This function represents  the ratio of the probability of an event occurring to the probability of its not occurring, 
and always assumes nonnegative finite values. 
It is used in reliability theory, because it quantifies the strength of the association 
between the failure of a system after time $ x $ and before time $ x $. 
Due to Eq.\ (\ref{Odds Function}), we can express the CIGF of $X$ in terms of the odds function in 
two equivalent useful ways: 
\begin{equation*}
	G_X(\alpha,\beta)=\int_{l}^{r}[F(x)]^{\alpha+\beta}[\theta(x)]^{\beta}\,\text{d}x
	=\int_{l}^{r}\left[\theta(x)\right]^{-\alpha}\left[\overline{F}(x)\right]^{\alpha+\beta}\,\text{d}x.
\end{equation*}
Hence, when the parameters $\alpha$ and $\beta$ may take negative values such that $\alpha+\beta=0$, then 
the CIGF of $X$ can be expressed   in terms of the odds function as
\begin{equation}\label{G_X in terms of only Odds Function with beta}
	G_X(-\beta,\beta)=\int_{l}^{r}[{\theta(x)}]^{\beta}\,\text{d}x
	\qquad \forall \,\beta\in D_{X,\theta}
\end{equation}
where
$$
D_{X,\theta}=\{ \beta \in \mathbb{R} \colon  G_{X}(-\beta,\beta)   <+\infty\}.
$$
Table \ref{tableCIGF} shows various examples of the CIGF expressed in terms of the Euler Beta function 
$B(x,y)=\int_{0}^{1}t^{x-1}(1-t)^{y-1}\,\text{d}t$ or the incomplete Beta function 
$B\left(p;x,y\right)=\int_{0}^{p}t^{x-1}(1-t)^{y-1}\,\text{d}t$, $p\in [0,1]$. 
\par 
Finally, by recalling Eq.\ (\ref{eq:GiniX}), we remark that for $\alpha=\beta=1$ Eq.\ (\ref{CIGF of Erlang distribution}) 
and the examples in Table \ref{tableCIGF} are in agreement with Giorgi and Nadarajah \cite{Giorgi:Nadarajah}.
\begin{table}[!t]
		\caption{The CIGF for some notable distributions.\label{tableCIGF}}
		\scalebox{0.85}{
		{\small
			\begin{tabular}{lccc}
				\toprule
				Distribution &$D_X$	&$D_{X,\theta}$& $G_X(\alpha,\beta)$\\
				\midrule
				$X \sim \text{Bernoulli}(p)$&$\mathbb{R}^2$&$\mathbb{R}$ 
				&$(1-p)^\alpha\, p^\beta=\frac{{\rm d}}{{\rm d}p}B\left(p; \beta+1,\alpha+1\right)$\\
				\midrule
				$X\sim\text{Unif}(l,r)$&$\{(\alpha,\beta)\in \mathbb{R}^2:-1<\alpha<\infty, -1<\beta<\infty\}$&$(-1,1)$ &$(r-l)B(\alpha+1,\beta+1)$\\
				\midrule
				$X \sim \text{Power}(\theta)$&$\{(\alpha,\beta)\in \mathbb{R}^2:-\frac{1}{\theta}<\alpha<\infty, -1<\beta<\infty\}$&
				$\left(-\frac{1}{\theta},\frac{1}{\theta}\right)$ &$\frac{1}{\theta}\;B\left( \alpha+\frac{1}{\theta},\beta+1\right)$\\
				\midrule
				$X\sim \text{Exp}(\lambda)$&$\{(\alpha,\beta)\in \mathbb{R}^2:-1<\alpha<\infty, 0<\beta<\infty\}$&$(0,1)$ & 	$\frac{1}{\lambda}B(\alpha+1,\beta)$\\
				\midrule
				$X \sim \text{Laplace}(0,\lambda)$&$\{(\alpha,\beta)\in \mathbb{R}^2:0<\alpha<\infty,\;0<\beta<\infty\}$&$\emptyset$ &$\frac{\lambda}{2}\left[B\left(\frac12;\alpha,\beta+1\right)+B\left(\frac12;\alpha+1,\beta\right)\right]$\\
				\bottomrule	
		\end{tabular}
		}
		}
\end{table}


\section{Inequalities and further results}\label{sect:inequalities}
In this section, we obtain some bounds and further results regarding the CIGF. 
Specifically, we first refer to well-known inequalities named after 
Chernoff, Bernoulli,  Minkowski and H\"older's (see, for istance Schilling \cite{Schilling:05}). 
\par
Hereafter, thanks to the Chernoff's inequalities we recover some bounds for the CIGF in terms of   
the moment generating function (MGF) of $X$, denoted by $M_X(s)=\mathbb{E}(e^{sX})$, $s\in \mathbb R$. 
\begin{proposition}\label{prop Chernoff's inequalities}
	Let $X$ be a nonnegative random variable with support $(0,r)$, where $r\in(0,+\infty)$, and having finite CIGF $G_X(\alpha,\beta)$, 
	for $(\alpha,\beta)\in D_X$. Assume that the MGF of $X$ satisfies $M_X(s)<+\infty$ for all $s\in (-s_0, s_0)$, with $s_0>0$. 
	Then, 
	\\
	(i) for all  $s_1\in(-s_0,0)$, $s_2\in(0,s_0)$ and $(\alpha,\beta)\in D_X\cap(\mathbb{R}^+)^2$, one has 
	\begin{equation}\label{Chernoff's inequality for CIGF +}
		G_X(\alpha,\beta)
		\le g(r; \alpha,\beta, {\bf s}) \left[M_X(s_1)\right]^\alpha\left[M_X(s_2)\right]^\beta,
	\end{equation}
	(ii) for all  $s_1\in(-s_0,0)$, $s_2\in(0,s_0)$ and $(\alpha,\beta)\in D_X\cap(\mathbb{R}^-)^2$, one has 
	\begin{equation}\label{Chernoff's inequality for CIGF -}
		G_X(\alpha,\beta)\ge g(r; \alpha,\beta, {\bf s})
		\left[M_X(s_1)\right]^\alpha\left[M_X(s_2)\right]^\beta,
	\end{equation}
	where 
	$$
	g(r; \alpha,\beta, {\bf s})=\left\{
	\begin{array} {ll}
		\displaystyle 
		\frac{1}{\alpha s_1+\beta s_2}\left[1-e^{-(\alpha s_1+\beta s_2)r}\right], & \hbox{ if }\alpha s_1+\beta s_2\neq 0, \\[2mm]
		r, &  \hbox{ if }\alpha s_1+\beta s_2= 0.
	\end{array} 
	\right.
	$$
\end{proposition}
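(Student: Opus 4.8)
The plan is to bound the integrand $[F(x)]^{\alpha}[\overline{F}(x)]^{\beta}$ pointwise on $(0,r)$ by a single exponential and then integrate. First I would record the two Chernoff-type tail estimates supplied by the MGF. For $s_2\in(0,s_0)$, Markov's inequality applied to $e^{s_2 X}$ gives the upper-tail bound $\overline{F}(x)=\mathbb{P}(e^{s_2 X}\ge e^{s_2 x})\le e^{-s_2 x}M_X(s_2)$. For $s_1\in(-s_0,0)$, the map $y\mapsto e^{s_1 y}$ is decreasing, so $\{X\le x\}=\{e^{s_1 X}\ge e^{s_1 x}\}$ and Markov's inequality yields the lower-tail bound $F(x)\le e^{-s_1 x}M_X(s_1)$. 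Both right-hand sides are finite by the hypothesis on $M_X$.

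For part (i), with $\alpha,\beta>0$, raising these bounds to the powers $\alpha$ and $\beta$ preserves their direction, and since all factors are nonnegative the product satisfies
\begin{equation*}
	[F(x)]^{\alpha}[\overline{F}(x)]^{\beta}\le e^{-(\alpha s_1+\beta s_2)x}\,[M_X(s_1)]^{\alpha}[M_X(s_2)]^{\beta}.
\end{equation*}
Integrating over $(0,r)$ and factoring out the constant $[M_X(s_1)]^{\alpha}[M_X(s_2)]^{\beta}$ leaves exactly $\int_0^r e^{-(\alpha s_1+\beta s_2)x}\,\mathrm{d}x$, which I will identify with $g(r;\alpha,\beta,\mathbf{s})$.

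For part (ii), with $\alpha,\beta<0$, the only change is that $t\mapsto t^{\alpha}$ is now decreasing, so raising each Chernoff bound to its power reverses the inequality; the resulting lower bounds remain nonnegative, so their product is again legitimate and yields the reversed pointwise estimate. Integrating over $(0,r)$ then delivers the stated lower bound for $G_X(\alpha,\beta)$, the finiteness of the left side being guaranteed by the hypothesis that the CIGF is finite.

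It remains only to evaluate the elementary integral. Setting $c=\alpha s_1+\beta s_2$, I find $\int_0^r e^{-cx}\,\mathrm{d}x=\frac{1}{c}\,(1-e^{-cr})$ when $c\ne0$ and $\int_0^r\mathrm{d}x=r$ when $c=0$, which matches the two branches of $g$. Since $s_1<0<s_2$, the sign of $c$ is genuinely undetermined, which is precisely why the case split in the definition of $g$ is unavoidable. There is no deep obstacle here: the whole argument is bookkeeping, and the only point requiring care is tracking the direction of each inequality when passing to the powers $\alpha$ and $\beta$ and checking that every factor stays nonnegative, so that the pointwise products may be multiplied without changing the inequality sense.
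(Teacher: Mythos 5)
Your proof is correct and follows essentially the same route as the paper's: apply the two Chernoff bounds $F(x)\le e^{-s_1x}M_X(s_1)$ and $\overline{F}(x)\le e^{-s_2x}M_X(s_2)$ pointwise in the integrand of $G_X(\alpha,\beta)$, with the inequality direction preserved for $\alpha,\beta>0$ and reversed for $\alpha,\beta<0$, then integrate the exponential over $(0,r)$ to obtain $g(r;\alpha,\beta,\mathbf{s})$. Your write-up in fact spells out the details the paper leaves implicit (the Markov-inequality derivation for $s_1<0$, the positivity of $F$ and $\overline{F}$ on the open support needed for the negative-power case, and the two-branch evaluation of the integral), so nothing further is needed.
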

\begin{proof}
	Applying Chernoff's inequalities $\mathbb{P}(X\le x)\le e^{-s_1x}M_X(s_1)$ 
	and $ \mathbb{P}(X\ge x)\le e^{-s_2x}M_X(s_2)$, for $x>0$, 
	to Eq.\ (\ref{G_X}), by the effect of integration, for all  $s_1\in(-s_0,0)$, $s_2\in(0,s_0)$ and $(\alpha,\beta)\in D_X\cap(\mathbb{R}^+)^2$, it follows that
	$$
	G_X(\alpha,\beta)\le\int_{0}^{r}\left[e^{-s_1x}M_X(s_1)\right]^\alpha\left[e^{-s_2x}M_X(s_2)\right]^\beta\,\text{d}x.
	$$
	Few calculations give Eq.\ (\ref{Chernoff's inequality for CIGF +}). For $(\alpha,\beta)\in D_X\cap(\mathbb{R}^-)^2$, 
	Eq.\ (\ref{Chernoff's inequality for CIGF -}) can be obtained similarly.
\end{proof}
The case when $X$ has support $(0,+\infty)$ can be easily derived as follows. 
\begin{corollary}
	Let $X$ be a nonnegative random variable with support $(0,+\infty)$, and having finite CIGF $G_X(\alpha,\beta)$, 
	for $(\alpha,\beta)\in D_X$. Assume that the MGF of $X$ satisfies $M_X(s)<+\infty$ for all $s\in (-s_0, s_0)$, with $s_0>0$. 
	Then, 
	\\
	(i) for all  $s_1\in(-s_0,0)$, $s_2\in(0,s_0)$ and $(\alpha,\beta)\in D_X\cap(\mathbb{R}^+)^2$ such that $\alpha s_1+\beta s_2>0$, one has 
	\begin{equation}\label{Chernoff's inequality for CIGF + with r=inf}
		G_X(\alpha,\beta)\le\frac{1}{\alpha s_1+\beta s_2}\left[M_X(s_1)\right]^\alpha\left[M_X(s_2)\right]^\beta,
	\end{equation}
	(ii) for all  $s_1\in(-s_0,0)$, $s_2\in(0,s_0)$ and $(\alpha,\beta)\in D_X\cap(\mathbb{R}^-)^2$ such that $\alpha s_1+\beta s_2>0$, one has 
	\begin{equation*}
		G_X(\alpha,\beta)\ge\frac{1}{\alpha s_1+\beta s_2}\left[M_X(s_1)\right]^\alpha\left[M_X(s_2)\right]^\beta.
	\end{equation*}
\end{corollary}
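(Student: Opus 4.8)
The plan is to reduce the corollary to the already-established Proposition \ref{prop Chernoff's inequalities} by letting the upper endpoint of the support tend to infinity. Since the integrand $[F(x)]^\alpha[\overline{F}(x)]^\beta$ appearing in Eq.\ (\ref{G_X}) is nonnegative, the truncated integrals $\int_0^r [F(x)]^\alpha[\overline{F}(x)]^\beta\,\mathrm{d}x$ increase monotonically to $G_X(\alpha,\beta)$ as $r\uparrow+\infty$, so it suffices to control the limit of the right-hand sides of the bounds in the Proposition.

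First I would reproduce the opening step of the proof of Proposition \ref{prop Chernoff's inequalities}: for $(\alpha,\beta)\in D_X\cap(\mathbb{R}^+)^2$, applying the Chernoff bounds $\mathbb{P}(X\le x)\le e^{-s_1x}M_X(s_1)$ and $\mathbb{P}(X\ge x)\le e^{-s_2x}M_X(s_2)$ to the integrand gives
\begin{equation*}
	G_X(\alpha,\beta)\le \left[M_X(s_1)\right]^\alpha\left[M_X(s_2)\right]^\beta\int_0^{+\infty}e^{-(\alpha s_1+\beta s_2)x}\,\mathrm{d}x.
\end{equation*}
The only point that differs from the finite-support case is the evaluation of the exponential integral over the half-line. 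Here the hypothesis $\alpha s_1+\beta s_2>0$ is exactly what guarantees its convergence, yielding $\int_0^{+\infty}e^{-(\alpha s_1+\beta s_2)x}\,\mathrm{d}x=1/(\alpha s_1+\beta s_2)$ and hence (\ref{Chernoff's inequality for CIGF + with r=inf}). Equivalently, one observes that the prefactor $g(r;\alpha,\beta,{\bf s})$ from the Proposition satisfies $g(r;\alpha,\beta,{\bf s})\to 1/(\alpha s_1+\beta s_2)$ as $r\to+\infty$ precisely when $\alpha s_1+\beta s_2>0$, because then $e^{-(\alpha s_1+\beta s_2)r}\to 0$.

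For part (ii), with $(\alpha,\beta)\in D_X\cap(\mathbb{R}^-)^2$, the powers of the Chernoff bounds reverse the inequality (as in the Proposition), but the subsequent computation of the exponential integral is identical, so the same passage to the limit under $\alpha s_1+\beta s_2>0$ delivers the reversed bound. The main (and only) delicate point is thus the convergence of the half-line integral, which is why the extra assumption $\alpha s_1+\beta s_2>0$ --- absent in the finite-support Proposition --- must be imposed here; everything else is a routine limit, justified by monotone convergence of the nonnegative truncated integrals.
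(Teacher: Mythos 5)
Your proposal is correct and matches the paper's (implicit) derivation: the corollary is stated as an easy consequence of Proposition \ref{prop Chernoff's inequalities}, obtained by running the same Chernoff argument with the exponential integral taken over $(0,+\infty)$, where the extra hypothesis $\alpha s_1+\beta s_2>0$ ensures convergence and yields the constant $1/(\alpha s_1+\beta s_2)$. Your additional monotone-convergence framing is a harmless elaboration of the same idea.
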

\par
The following example provides an application of the previous results. 
\begin{example}
	Let $X\sim Erlang(2,\lambda)$,  $\lambda>0$, as in Example \ref{example Erlang}, with MGF 
	$M_X(s)=\lambda^2 (\lambda-s)^{-2}$, for $s<\lambda$. Thanks to Eqs.\ (\ref{CIGF of Erlang distribution}) 
	and  (\ref{Chernoff's inequality for CIGF + with r=inf}), 
	some calculations give, for $(\alpha,\beta)\in  (\mathbb{R}^+)^2$, 
$$
		G_X(\alpha,\beta) \le \inf_{\substack{(s_1,s_2)\in(-\lambda,0)\times(0,\lambda)\\ \alpha s_1+\beta s_2>0}}\,
		\frac{1}{\alpha s_1+\beta s_2}\cdot\frac{\lambda^{2(\alpha+\beta)}}{(\lambda-s_1)^{2\alpha}(\lambda-s_2)^{2\beta}}
		= \frac{1}{\lambda}\frac{1}{2^{2\beta}}\left(\frac{1+2\beta}{\beta}\right)^{1+2\beta}.
$$
\end{example}
\par
Let us now express some upper bounds for the CIGF in terms of the  measures introduced in 
Definition \ref{remark H}. 
\begin{proposition}
	Under the assumptions specified in Definition \ref{def G_X}, the CIGF of a random variable $X$ satisfies 
	the following inequalities:
	\begin{equation*}
		G_X(\alpha,\beta)\le K_X(\beta)-\alpha K_X(\beta+1)\qquad
		\forall\, (\alpha,\beta)\in D_X\cap[0,1]\times\mathbb{R},
	\end{equation*}
	\begin{equation*}
		G_X(\alpha,\beta)\le H_X(\alpha)-\beta H_X(\alpha+1)\qquad
		\forall\,(\alpha,\beta)\in D_X\cap\mathbb{R}\times[0,1].
	\end{equation*}
\end{proposition}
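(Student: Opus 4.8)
The plan is to reduce both displayed inequalities to a single elementary pointwise estimate, Bernoulli's inequality $(1+t)^{p}\le 1+pt$ (valid for $t\ge-1$ and $0\le p\le1$), which is exactly one of the tools announced at the opening of this section. The two bounds are dual under the exchange of $F$ with $\overline F$ and of $\alpha$ with $\beta$, so I would prove the first in full and obtain the second by the same argument with the roles interchanged.

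For the first inequality I would form the difference of the two sides and factor out the common weight $[\overline F(x)]^{\beta}\ge0$. Using the definitions (\ref{G_X}) and (\ref{CIG}),
\[
K_X(\beta)-\alpha\,K_X(\beta+1)-G_X(\alpha,\beta)
=\int_{l}^{r}[\overline F(x)]^{\beta}\Bigl(1-\alpha\,\overline F(x)-[F(x)]^{\alpha}\Bigr)\,{\rm d}x,
\]
so it suffices to show that the bracketed factor is nonnegative for every $x\in(l,r)$. Setting $u=F(x)\in[0,1]$, hence $\overline F(x)=1-u$, the bracket becomes $1+\alpha(u-1)-u^{\alpha}$, and its nonnegativity is precisely
\[
u^{\alpha}\le 1+\alpha(u-1),\qquad u\in[0,1],\ \ 0\le\alpha\le1,
\]
which is Bernoulli's inequality with $t=u-1\ge-1$ and $p=\alpha$. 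Integrating the resulting pointwise bound $[F(x)]^{\alpha}[\overline F(x)]^{\beta}\le[\overline F(x)]^{\beta}-\alpha[\overline F(x)]^{\beta+1}$ over $(l,r)$ then yields the first claim.

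The second inequality follows identically after swapping $F$ and $\overline F$: factoring $[F(x)]^{\alpha}\ge0$ out of $H_X(\alpha)-\beta H_X(\alpha+1)-G_X(\alpha,\beta)$ reduces the claim to $v^{\beta}\le1+\beta(v-1)$ with $v=\overline F(x)\in[0,1]$ and $\beta\in[0,1]$, again Bernoulli's inequality. The one step needing genuine care, rather than the pointwise comparison itself, is the passage from the pointwise estimate to the integrated one: I must ensure that the marginal quantities on the right-hand side, $K_X(\beta)$ and $K_X(\beta+1)$ (respectively $H_X(\alpha)$ and $H_X(\alpha+1)$), are finite so that their difference is well defined. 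Since $0\le[F(x)]^{\alpha}\le1$ forces the integrand of $G_X(\alpha,\beta)$ and that of $K_X(\beta)$ to share the same behaviour as $x\to r$, finiteness of $G_X(\alpha,\beta)$ already controls the tail at the upper endpoint; I would check the remaining endpoint (relevant when the support is unbounded below) separately, as this is the only place where the bound could fail to be meaningful and hence the main obstacle to a fully general statement.
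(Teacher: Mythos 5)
Your proof is correct and follows essentially the same route as the paper's: the paper also applies Bernoulli's inequality pointwise in the form $[1-\overline F(x)]^{\alpha}\le 1-\alpha\overline F(x)$ (respectively $[1-F(x)]^{\beta}\le 1-\beta F(x)$) and integrates. Your closing remark on finiteness of $K_X(\beta)$ and $K_X(\beta+1)$ is a legitimate extra refinement the paper glosses over (when $l=-\infty$ both marginal measures can be infinite even though $G_X(\alpha,\beta)$ is finite, so the right-hand side should then be read as the single integral $\int_l^r[\overline F(x)]^{\beta}\bigl(1-\alpha\overline F(x)\bigr)\,{\rm d}x$), but it does not alter the argument.
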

\begin{proof}
	Due to Bernoulli's inequality with real exponents, for all $x\in\mathbb{R}$ it follows that
	$$
	\left[1-\overline{F}(x)\right]^\alpha\le1 -\alpha\overline{F}(x)
	\qquad\forall\, \alpha\in[0,1]
	$$
	and 
	$$
	\left[1-F(x)\right]^\beta\le1-\beta F(x) 
	\qquad\forall\,\beta\in[0,1].
	$$
	Hence, the thesis immediately follows from Definitions \ref{def G_X} and \ref{remark H}.
\end{proof}
\par
Hereafter we use the Minkowski's inequality to obtain suitable bounds for the measures introduced in Definition \ref{remark H} 
and for $G_X(\gamma,\gamma)$. 
\begin{proposition}
	Under the assumptions specified in Definition \ref{def G_X} and Definition \ref{remark H}, if $X$ has finite support in $(l,r)$, 
	\\
	(i) for all $\gamma\ge1$ such that $(\gamma,0)\in D_X$ and $(0,\gamma)\in D_X$ we have 
	\begin{equation*}
		\left[\left(r-l\right)^\frac{1}{\gamma}-\left[H_X(\gamma)\right]^\frac{1}{\gamma}\right]^\gamma\le K_X(\gamma)
		\le\left[\left(r-l\right)^\frac{1}{\gamma}+\left[H_X(\gamma)\right]^\frac{1}{\gamma}\right]^\gamma,
	\end{equation*}
	\begin{equation*}
		\left[\left(r-l\right)^\frac{1}{\gamma}-\left[K_X(\gamma)\right]^\frac{1}{\gamma}\right]^\gamma\le H_X(\gamma)\le\left[\left(r-l\right)^\frac{1}{\gamma}+\left[K_X(\gamma)\right]^\frac{1}{\gamma}\right]^\gamma;
	\end{equation*}
	(ii)  for all $\gamma\ge1$ such that $(0,\gamma)\in D_X$  we have 
	\begin{equation*}
		G_X(\gamma,\gamma)\le\left[\left[K_X(\gamma)\right]^\frac{1}{\gamma}+\left[K_X(2\gamma)\right]^\frac{1}{\gamma}\right]^\gamma;
	\end{equation*}
	(iii)  for all $\gamma\ge1$ such that $(\gamma,0)\in D_X$  we have 
	\begin{equation*}
		G_X(\gamma,\gamma)\le\left[\left[H_X(\gamma)\right]^\frac{1}{\gamma}+\left[H_X(2\gamma)\right]^\frac{1}{\gamma}\right]^\gamma.
	\end{equation*}
\end{proposition}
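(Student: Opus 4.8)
The plan is to read every integral in the statement as a power of an $L^\gamma$-norm on $(l,r)$ under Lebesgue measure, and then to invoke Minkowski's inequality in both its standard form and its reverse form. Writing $\|g\|_\gamma = \left(\int_l^r |g(x)|^\gamma \,{\rm d}x\right)^{1/\gamma}$ for $\gamma \ge 1$, Definition \ref{remark H} lets me record $\|F\|_\gamma = [H_X(\gamma)]^{1/\gamma}$, $\|\overline{F}\|_\gamma = [K_X(\gamma)]^{1/\gamma}$, and, using that $X$ has finite support in $(l,r)$, $\|1\|_\gamma = (r-l)^{1/\gamma}$. The finiteness hypotheses $(\gamma,0)\in D_X$ and $(0,\gamma)\in D_X$ guarantee that the relevant norms are finite, so Minkowski applies throughout.

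For part (i) I would exploit the identity $\overline{F}=1-F$. The triangle inequality gives $\|\overline{F}\|_\gamma=\|1-F\|_\gamma\le\|1\|_\gamma+\|F\|_\gamma$, and the reverse triangle inequality gives $\|1-F\|_\gamma\ge\big|\,\|1\|_\gamma-\|F\|_\gamma\,\big|$. Since $0\le F\le1$ forces $H_X(\gamma)=\int_l^r F^\gamma\le r-l$, the difference $(r-l)^{1/\gamma}-[H_X(\gamma)]^{1/\gamma}$ is nonnegative and the absolute value drops out. Combining the two bounds and raising to the power $\gamma$ (licit because the bracketed quantities are nonnegative and $t\mapsto t^\gamma$ is increasing on $[0,\infty)$) produces the first chain of inequalities. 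The second chain follows verbatim upon swapping $F$ and $\overline{F}$ through $F=1-\overline{F}$.

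For parts (ii) and (iii) I would first factor the integrand of $G_X(\gamma,\gamma)$. Writing $[F(x)]^\gamma[\overline{F}(x)]^\gamma=[\overline{F}(x)-\overline{F}(x)^2]^\gamma$ identifies $G_X(\gamma,\gamma)=\|\overline{F}-\overline{F}^2\|_\gamma^\gamma$; the triangle inequality then bounds $\|\overline{F}-\overline{F}^2\|_\gamma\le\|\overline{F}\|_\gamma+\|\overline{F}^2\|_\gamma$, and since $\|\overline{F}^2\|_\gamma^\gamma=\int_l^r\overline{F}^{2\gamma}=K_X(2\gamma)$, raising to the power $\gamma$ delivers (ii). Part (iii) is the mirror computation, factoring instead $[F\overline{F}]^\gamma=[F-F^2]^\gamma$ via $\overline{F}=1-F$ and reading off $\|F^2\|_\gamma^\gamma=H_X(2\gamma)$.

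I expect the whole argument to be short once the norms are identified; the only place calling for genuine care is the sign check in part (i) — verifying $(r-l)^{1/\gamma}\ge[H_X(\gamma)]^{1/\gamma}$ so that removing the absolute value leaves a true, nonnegative lower bound rather than a vacuous one. Everything else is a direct application of Minkowski's inequality followed by raising both sides to the power $\gamma$.
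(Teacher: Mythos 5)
Your proof is correct and takes essentially the same approach as the paper: both arguments are direct applications of Minkowski's inequality in $L^\gamma(l,r)$, with your reverse triangle inequality step being exactly the paper's expansion $(r-l)^{1/\gamma}=\bigl[\int_l^r[F(x)+\overline{F}(x)]^\gamma\,{\rm d}x\bigr]^{1/\gamma}\le[H_X(\gamma)]^{1/\gamma}+[K_X(\gamma)]^{1/\gamma}$ rearranged, and your factorizations $F\overline{F}=\overline{F}-\overline{F}^2=F-F^2$ being precisely the computation the paper leaves implicit when it says parts (ii) and (iii) follow ``in the same way.'' Your explicit sign check $(r-l)^{1/\gamma}\ge[H_X(\gamma)]^{1/\gamma}$, needed before raising the lower bound to the power $\gamma$, is a detail the paper glosses over, so spelling it out is a small improvement.
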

\begin{proof}
	By applying Minkowski's inequality, for $\gamma\ge1$ we have
	\begin{equation*}
		\begin{aligned}
			\left[K_X(\gamma)\right]^\frac{1}{\gamma}
			=\left[\int_{l}^{r}\left[1-F(x)\right]^\gamma\, \text{d}x\right]^\frac{1}{\gamma}
			&\le\left(r-l\right)^\frac{1}{\gamma}+\left[\int_{l}^{r}\left[F(x)\right]^\gamma\, \text{d}x\right]^\frac{1}{\gamma}\\
			&=\left(r-l\right)^\frac{1}{\gamma}+\left[H_X(\gamma)\right]^\frac{1}{\gamma},
		\end{aligned}		
	\end{equation*}
	and also 
	\begin{equation*}
		\begin{aligned}
			\left(r-l\right)^\frac{1}{\gamma}
			=\left[\int_{l}^{r}\left[F(x)+\overline{F}(x)\right]^\gamma\, \text{d}x\right]^\frac{1}{\gamma}
			&\le\left[\int_{l}^{r}\left[F(x)\right]^\gamma\,\text{d}x\right]^\frac{1}{\gamma}
			+\left[\int_{l}^{r}\left[\overline{F}(x)\right]^\gamma\, \text{d}x\right]^\frac{1}{\gamma}\\
			&=\left[H_X(\gamma)\right]^\frac{1}{\gamma}+\left[K_X(\gamma)\right]^\frac{1}{\gamma}.
		\end{aligned}	
	\end{equation*}
	Combining the two latter inequalities we obtain the bounds for $K_X(\gamma)$. 
	The other relations can be obtained in the same way, by taking into account that, for $\gamma \geq 1$, if 
	$(0,\gamma)\in D_X$ then $(0,2\gamma)\in D_X$, and if $(\gamma,0)\in D_X$ then $(2\gamma,0)\in D_X$. 
\end{proof}
\par
We now prove an upper bound for $G_X(\alpha,\beta)$, for $\alpha+\beta=1$, making use of the H\"older's inequality. 
\begin{proposition}\label{prop Holder's inequality}
	Under the assumptions specified in Definition \ref{def G_X}, 
	let $X$ have finite support in $(l,r)$, with $(\theta,1-\theta)\in D_X$ for all $\theta\in(0,1)$.
	Then, 
	%
	%
	%
	\begin{equation}\label{Holder's inequality}
		G_X(\theta,1-\theta)\le\left[r-\mathbb{E}(X)\right]^\theta\left[\mathbb{E}(X)-l\right]^{1-\theta}
		\qquad\forall \; \theta\in(0,1).
	\end{equation}
\end{proposition}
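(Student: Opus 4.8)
The plan is to apply Hölder's inequality directly to the integral representation of $G_X(\theta,1-\theta)$ and then to identify the two resulting factors with the mean-based quantities appearing on the right-hand side of (\ref{Holder's inequality}). The key observation that motivates the whole argument is that the constraint $\alpha+\beta=1$ is exactly what turns the two exponents into a conjugate pair, so Hölder applies with no leftover terms.

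First I would write, from Definition \ref{def G_X},
$$
G_X(\theta,1-\theta)=\int_{l}^{r}\left[F(x)\right]^{\theta}\left[\overline{F}(x)\right]^{1-\theta}\,\text{d}x,
$$
and regard the integrand as the product of $\left[F(x)\right]^{\theta}$ and $\left[\overline{F}(x)\right]^{1-\theta}$. Applying Hölder's inequality with the conjugate exponents $p=1/\theta$ and $q=1/(1-\theta)$, which satisfy $1/p+1/q=\theta+(1-\theta)=1$ for every $\theta\in(0,1)$, the factor $\left[F(x)\right]^{\theta}$ raised to the power $p$ returns $F(x)$, and likewise $\left[\overline{F}(x)\right]^{1-\theta}$ raised to $q$ returns $\overline{F}(x)$. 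This gives
$$
G_X(\theta,1-\theta)\le\left(\int_{l}^{r}F(x)\,\text{d}x\right)^{\theta}\left(\int_{l}^{r}\overline{F}(x)\,\text{d}x\right)^{1-\theta}=\left[H_X(1)\right]^{\theta}\left[K_X(1)\right]^{1-\theta},
$$
in the notation of Definition \ref{remark H}. The finite-support hypothesis guarantees that both integrals converge, so the use of Hölder is legitimate.

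The second step is to rewrite these two integrals in terms of the mean. Using the standard tail representation of the expectation for a variable supported in $(l,r)$ with $l,r$ finite, I would establish
$$
\int_{l}^{r}\overline{F}(x)\,\text{d}x=\mathbb{E}(X)-l,\qquad\int_{l}^{r}F(x)\,\text{d}x=r-\mathbb{E}(X).
$$
These follow by writing $\mathbb{E}(X)-l=\mathbb{E}(X-l)=\int_{0}^{r-l}\overline{F}(l+t)\,\text{d}t$ and $r-\mathbb{E}(X)=\mathbb{E}(r-X)=\int_{0}^{r-l}F\big((r-t)^{-}\big)\,\text{d}t$, followed by the substitutions $x=l+t$ and $x=r-t$ respectively. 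Inserting these identities into the Hölder bound yields precisely (\ref{Holder's inequality}).

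I expect the only delicate point to be the derivation of the two mean-representation identities rather than the Hölder step itself. Here the finiteness of $l$ and $r$ must be invoked explicitly, since without the finite-support hypothesis one or both integrals could diverge; one must also note that $F$ and its left-continuous version $F(x^{-})$ differ on an at most countable set, so the value of $\int_{l}^{r}F(x)\,\text{d}x$ is unaffected. Once the conjugate exponents $1/\theta$ and $1/(1-\theta)$ are chosen, everything else is routine.
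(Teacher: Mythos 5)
Your proposal is correct and follows essentially the same route as the paper: Hölder's inequality with conjugate exponents $1/\theta$ and $1/(1-\theta)$ applied to $[F(x)]^{\theta}[\overline{F}(x)]^{1-\theta}$, followed by the identifications $\int_{l}^{r}F(x)\,\mathrm{d}x = r-\mathbb{E}(X)$ and $\int_{l}^{r}\overline{F}(x)\,\mathrm{d}x = \mathbb{E}(X)-l$. The only difference is that you spell out the tail-representation identities (which the paper leaves implicit, noting them only in a remark after the proof), and that extra care is harmless.
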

\begin{proof}
	Due to the H\"older's inequality with conjugate exponents $\frac{1}{\theta},\frac{1}{1-\theta}$, for all $\theta\in(0,1)$ we have
	\begin{equation*}
		G_X(\theta,1-\theta)\le\left(\int_{l}^{r}F(x)\,\text{d}x\right)^\theta\left(\int_{l}^{r}\overline{F}(x)\,\text{d}x\right)^{1-\theta},
	\end{equation*}
	this yielding Eq.\ (\ref{Holder's inequality}). 
\end{proof} 
\par
We remark that Eq.\ (\ref{Holder's inequality}) is satisfied as equality when $F(x)=\overline{F}(x)$ $\forall \, x\in(l,r)$, i.e.\ when $\mathbb P(X=l)=\mathbb P(X=r)=1/2$. Moreover, we note that the right-hand-side of Eq.\ (\ref{Holder's inequality}) can be rewritten by taking into account that, under the given assumptions, 
$$
r-\mathbb{E}(X)=H_X(1), \qquad \mathbb{E}(X)-l=K_X(1).
$$ 
\par 
Hereafter we show that the CIGF of an absolutely continuous random variable  can be expressed as the product of the Euler beta function and the expected value of a suitably transformed beta-distributed random variable.
\begin{proposition}
	If $X$ is an absolutely continuous random variable with PDF $f$, CDF $F$ and finite CIGF, 
	then,
	$$
	G_X(\alpha,\beta)=B(\alpha+1,\beta+1)\,\mathbb{E}\left[r(Y)\right],
	$$
	where $Y\sim {\rm Beta}(\alpha+1,\beta+1)$ is independent from $X$, with 
	$r(Y)=[f(F^{-1}(Y))]^{-1}$. 
\end{proposition}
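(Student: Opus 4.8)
The plan is to start from the integral representation of the CIGF for an absolutely continuous random variable established in Eq.~(\ref{CIGF in Remark 1}), namely
$$
G_X(\alpha,\beta)=\int_{0}^{1} u^{\alpha}(1-u)^{\beta}\,\frac{1}{f(F^{-1}(u))}\,{\rm d}u,
\qquad (\alpha,\beta)\in D_X,
$$
and to recognize the factor $u^{\alpha}(1-u)^{\beta}$ as, up to a normalizing constant, the density of a Beta-distributed random variable on $(0,1)$. The whole argument is a change of perspective rather than a computation.

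First I would recall that the PDF of $Y\sim {\rm Beta}(\alpha+1,\beta+1)$ is
$$
g_Y(u)=\frac{u^{\alpha}(1-u)^{\beta}}{B(\alpha+1,\beta+1)},\qquad u\in(0,1),
$$
which is well defined precisely when $\alpha+1>0$ and $\beta+1>0$, i.e.\ exactly when the Euler Beta function in the denominator is finite. Multiplying and dividing the integrand above by $B(\alpha+1,\beta+1)$, and pulling this constant outside the integral, yields
$$
G_X(\alpha,\beta)=B(\alpha+1,\beta+1)\int_{0}^{1}\frac{1}{f(F^{-1}(u))}\,g_Y(u)\,{\rm d}u.
$$

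Finally I would identify the remaining integral as an expectation with respect to the Beta law. Since $r(Y)=[f(F^{-1}(Y))]^{-1}$ and $Y$ has density $g_Y$, the law of the unconscious statistician gives $\int_{0}^{1}[f(F^{-1}(u))]^{-1}\,g_Y(u)\,{\rm d}u=\mathbb{E}[r(Y)]$, whence the claimed identity $G_X(\alpha,\beta)=B(\alpha+1,\beta+1)\,\mathbb{E}[r(Y)]$ follows at once. I do not expect a serious obstacle: the only point deserving care is the tacit requirement $\alpha>-1$ and $\beta>-1$, needed both for $B(\alpha+1,\beta+1)$ to be finite and for the Beta density $g_Y$ to exist; this is guaranteed by the standing assumption that $(\alpha,\beta)\in D_X$ with $G_X(\alpha,\beta)<+\infty$, which rules out the boundary cases where the representation would degenerate.
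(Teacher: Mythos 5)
Your proof is correct and takes essentially the same route as the paper, whose own proof is just the one-line instruction to combine Eq.~(\ref{CIGF in Remark 1}) with the PDF of $Y\sim{\rm Beta}(\alpha+1,\beta+1)$; you have simply written out the multiply-and-divide step and the identification of the expectation explicitly. One small correction to your closing remark: it is \emph{not} true that $(\alpha,\beta)\in D_X$ guarantees $\alpha>-1$ and $\beta>-1$ --- e.g.\ for the Power$(\theta)$ law with $\theta\in(0,1)$, Table~\ref{tableCIGF} gives $D_X=\{(\alpha,\beta): -\frac{1}{\theta}<\alpha<\infty,\, -1<\beta<\infty\}$, which contains points with $\alpha\le -1$ since $-\frac{1}{\theta}<-1$; the restriction $\alpha,\beta>-1$ should instead be read as an implicit hypothesis of the proposition, required merely for $Y\sim{\rm Beta}(\alpha+1,\beta+1)$ to exist.
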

\begin{proof}
	The thesis is obtained making use of Eq.\ (\ref{CIGF in Remark 1}) and recalling the PDF of $Y\sim {\rm Beta}(\alpha+1,\beta+1)$.
\end{proof}
\par
The above result collects various features of the CIGF, i.e.\ the relations (i) to the Beta distribution,  which reflects the form of the right-hand-side 
of (\ref{G_X}), and (ii) to the transformation $f(F^{-1}(\cdot))$, which plays a relevant role in the context of variability measures   as 
developed in Section \ref{sect:Generalizations}. 
\par
We conclude this section by relating the CIGF to a series involving the Golomb's information generating function of the equilibrium random variable. 
To this aim, we recall that for a nonnegative random variable $X$, with SF $\overline{F}(x)$ and 
expected value $\mathbb{E}[X]\in (0, +\infty)$, the equilibrium random variable of $X$ is a nonnegative 
absolutely continuous random variable, denoted as $X_e$, whose PDF is given by 
\begin{equation}\label{f_e(x) of X_e}
	f_e(x)=\frac{\overline{F}(x)}{\mathbb{E}[X]}, \qquad x>0.
\end{equation}
Recalling Eq.\ (\ref{eq:defIGX}), hereafter we denote by $\mathcal{IG}_{X_e}$ the 
Golomb's information generating function  of the equilibrium random variable $X_e$.
\begin{proposition}\label{prop CIGF in terms of equilibrium PDF}
	If $X$ is a nonnegative random variable having  expected value $\mathbb{E}[X]\in (0, +\infty)$ 
	and with finite CIGF, then 
	\begin{equation*}
		G_X(\alpha,\beta)=\sum_{n=0}^{\infty}\binom{\alpha}{n}(-1)^n(\mathbb{E}[X])^{n+\beta}\,\mathcal{IG}_{X_e}(n+\beta)
		\qquad \forall \; (\alpha,\beta)\in D_X,
	\end{equation*}
	where $X_e$ is the equilibrium random variable of $X$.
\end{proposition}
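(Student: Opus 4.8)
The plan is to expand the factor $[F(x)]^{\alpha}$ appearing in the definition (\ref{G_X}) by means of the binomial series, and then to recognize each resulting integral as a value of the marginal measure $K_X$, which is in turn proportional to the Golomb function of the equilibrium variable $X_e$.

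First I would establish the bridge between $K_X$ and $\mathcal{IG}_{X_e}$. Since $X$ is nonnegative its support is contained in $(0,r)$, so that $l=0$; combining the equilibrium PDF (\ref{f_e(x) of X_e}) with the definitions (\ref{eq:defIGX}) and (\ref{CIG}) gives
\begin{equation*}
	(\mathbb{E}[X])^{\nu}\,\mathcal{IG}_{X_e}(\nu)
	=(\mathbb{E}[X])^{\nu}\int_{0}^{r}\left[\frac{\overline{F}(x)}{\mathbb{E}[X]}\right]^{\nu}\text{d}x
	=\int_{0}^{r}\left[\overline{F}(x)\right]^{\nu}\text{d}x=K_X(\nu),
\end{equation*}
valid for every $\nu$ with $(0,\nu)\in D_X$.

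Next, starting from (\ref{G_X}) I would write $F(x)=1-\overline{F}(x)$ and invoke the binomial expansion (\ref{useful for applications}) with the substitution $x\mapsto -\overline{F}(x)$, which is legitimate wherever $\overline{F}(x)\in[0,1)$:
\begin{equation*}
	\left[F(x)\right]^{\alpha}\left[\overline{F}(x)\right]^{\beta}
	=\sum_{n=0}^{\infty}\binom{\alpha}{n}(-1)^{n}\left[\overline{F}(x)\right]^{n+\beta}.
\end{equation*}
Integrating term by term over $(0,r)$ and using the identity just derived would yield
\begin{equation*}
	G_X(\alpha,\beta)=\sum_{n=0}^{\infty}\binom{\alpha}{n}(-1)^{n}K_X(n+\beta)
	=\sum_{n=0}^{\infty}\binom{\alpha}{n}(-1)^{n}(\mathbb{E}[X])^{n+\beta}\,\mathcal{IG}_{X_e}(n+\beta),
\end{equation*}
which is the stated formula. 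This is exactly the computation already performed concretely in Example \ref{example Erlang}, here recast for a general nonnegative $X$.

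The step I expect to be the main obstacle is the termwise integration, i.e.\ the interchange of the series and the integral. For non-integer $\alpha$ the coefficients $\binom{\alpha}{n}(-1)^{n}$ change sign, so Tonelli's theorem does not apply directly; I would instead justify the exchange by dominated convergence applied to the partial sums, or by verifying the absolute convergence $\sum_{n}\big|\binom{\alpha}{n}\big|\,K_X(n+\beta)<+\infty$ whenever the latter holds. The relevant integrability is governed precisely by the hypothesis $(\alpha,\beta)\in D_X$, that is by the finiteness of $G_X(\alpha,\beta)$: this controls the tail of the series through the monotonicity $K_X(n+\beta)\le K_X(\beta)$ (recall $0\le\overline{F}\le 1$) together with the decay of the binomial coefficients, in the same spirit in which $D_X$ is calibrated in the examples of Table \ref{tableCIGF}. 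A minor point is that the expansion fails on the null set where $\overline{F}(x)=1$, which does not affect the value of the integral.
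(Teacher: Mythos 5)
Your proposal is correct and takes essentially the same route as the paper's proof: there, one rewrites $\overline{F}(x)=f_e(x)\,\mathbb{E}[X]$, expands $[F(x)]^{\alpha}=\left(1-f_e(x)\,\mathbb{E}[X]\right)^{\alpha}$ via the binomial series (\ref{useful for applications}), and integrates term by term, which is exactly your computation with the bridge $K_X(\nu)=(\mathbb{E}[X])^{\nu}\,\mathcal{IG}_{X_e}(\nu)$ left implicit. Your additional discussion of the series--integral interchange (which the paper passes over silently, and which can also be settled by noting that $\binom{\alpha}{n}(-1)^{n}$ has constant sign for $n>\alpha+1$, so Tonelli applies to the tail) only adds rigor.
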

\begin{proof}
	Denoting by $(0,r)$ the support of $X$, with $r\in(0,+\infty]$, 
	from Eqs.\ (\ref{G_X}) and (\ref{f_e(x) of X_e})  it follows that, for all $(\alpha,\beta)\in D_X$
	\begin{equation}\label{useful for CIGF in terms of equilibrium PDF}
		G_X(\alpha, \beta)
		=\int_{0}^{r}\left(1-f_e(x)\,\mathbb{E}[X]\right)^{\alpha}(f_e(x)\,\mathbb{E}[X])^{\beta}\;\text{d}x.
	\end{equation}
	Since $\left\vert f_e(x)\,\mathbb{E}[X]\right\vert <1$, due to Eq.\  (\ref{useful for applications}) we have 
	$$
	(1-f_e(x)\,\mathbb{E}[X])^{\alpha}=\sum_{n=0}^{\infty}\binom{\alpha}{n}(-1)^n\left(f_e(x)\,\mathbb{E}[X]\right)^n.
	$$	
	By replacing the latter equation in   (\ref{useful for CIGF in terms of equilibrium PDF}), 
	the thesis immediately follows from Eq.\ (\ref{eq:defIGX}).
\end{proof}
%
\section{Connections with systems reliability}\label{section connections}
In this section we relate some results  exploited above to notions of interest in reliability theory. 
\par
Several applied  problems  involve complex systems consisting of many components. 
Here we focus on systems formed by $n$ components, where $X_1,X_2,\dots,X_n$ describe the random lifetimes 
of each component. We assume that they are independent and identically distributed (i.i.d.), with common CDF $F(x)$ and SF $\overline{F}(x)$. 
As well known,  a parallel system continues to work until the last component fails, and thus 
its lifetime is described by the sample maximum  
$$
X_{(n:n)}=\max\{X_1,X_2,\dots,X_n\},
$$
which has CDF
\begin{equation}\label{CDF of maximum order statistic}
	F_{(n:n)}(x)=\mathbb{P}(X_{(n:n)}\le x)=\left(F(x)\right)^n, 
	\qquad x\in \mathbb R. 
\end{equation}
Similarly, a series system  fails as soon as the first component stops working, and thus 
its lifetime is described by the sample minimum  
$$
X_{(1:n)}=\min\{X_1,X_2,\dots,X_n\},
$$
that possesses SF  
\begin{equation}\label{SF of minimum order statistic}
	\overline{F}_{(1:n)}(x)=\mathbb{P}(X_{(1:n)}>x)=\left(\overline{F}(x)\right)^n, 
	\qquad x\in \mathbb R.
\end{equation}
\begin{remark}
	Let $n\in \mathbb N$. 
	Recalling Definition \ref{remark H}, from Eqs.\  (\ref{CDF of maximum order statistic}) and
	(\ref{SF of minimum order statistic}) it immediately follows that
	$$
	H_{X_{(n:n)}}(\alpha)=H_X(n\alpha), \qquad \forall \; (n\alpha,0)\in D_X,
	$$
	$$
	K_{X_{(1:n)}}(\beta)=K_X(n\beta), \qquad \forall \; (0,n\beta)\in D_X,
	$$
	where $H_X$  and $K_X$ denote respectively the cumulative  information generating measure 
	and the cumulative residual information generating measure of $X_i$.
\end{remark}
We now focus on the expression of the CIGF for order statistics $X_{(n:n)}$ and $X_{(1:n)}$.
\begin{proposition}\label{prop G of minimum and maximum order statistcs}
	For $n\in \mathbb N$, let  $X_1,X_2,\dots,X_n$  be a random sample formed by  
	i.i.d.\ random lifetimes having finite cumulative  information generating measure  $H_X$  
	and cumulative residual information generating measure $K_X$. Then,   
	the CIGF of the order statistics $X_{(n:n)}$ and $X_{(1:n)}$ can be expressed respectively as 
	\begin{equation}\label{G for maximum order statistic}
		G_{X_{(n:n)}}(\alpha,\beta)
		=\sum_{i=0}^{\infty}(-1)^i\binom{\beta}{i}H_{X}\left( n(i+\alpha)\right), 
		\qquad \forall \; (\alpha, \beta)\in D_{X_{(n:n)}}
	\end{equation}
	and
	\begin{equation}\label{G for minimum order statistic}
		G_{X_{(1:n)}}(\alpha,\beta)
		=\sum_{j=0}^{\infty}(-1)^j\binom{\alpha}{j}K_{X}\left( n(j+\beta)\right),
		\qquad \forall \; (\alpha, \beta)\in D_{X_{(1:n)}}.
	\end{equation}
\end{proposition}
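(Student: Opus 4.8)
The plan is to handle the two identities symmetrically, each time reducing the CIGF of the order statistic to the marginal measures $H_X$ and $K_X$ of Definition \ref{remark H} by means of the generalized binomial series. I would start with the maximum. By (\ref{CDF of maximum order statistic}) the CDF of $X_{(n:n)}$ equals $[F(x)]^n$, so its SF is $1-[F(x)]^n$, and Definition \ref{def G_X} gives
\begin{equation*}
	G_{X_{(n:n)}}(\alpha,\beta)=\int_{l}^{r}[F(x)]^{n\alpha}\left(1-[F(x)]^n\right)^{\beta}\,\text{d}x.
\end{equation*}
Since $F(x)<1$ for almost every $x\in(l,r)$, one has $\left|[F(x)]^n\right|<1$, so the expansion (\ref{useful for applications}) applies to the factor $\left(1-[F(x)]^n\right)^{\beta}$ and produces $\sum_{i=0}^{\infty}(-1)^i\binom{\beta}{i}[F(x)]^{ni}$.

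The next step is to interchange summation and integration. After the exchange the generic integrand is $[F(x)]^{n(\alpha+i)}$, and since $\int_{l}^{r}[F(x)]^{n(\alpha+i)}\,\text{d}x=H_X(n(i+\alpha))$ by (\ref{H_X}), this yields (\ref{G for maximum order statistic}) at once. For the minimum I would argue identically, using (\ref{SF of minimum order statistic}): here the SF of $X_{(1:n)}$ is $[\overline{F}(x)]^n$ and the CDF is $1-[\overline{F}(x)]^n$, so that
\begin{equation*}
	G_{X_{(1:n)}}(\alpha,\beta)=\int_{l}^{r}\left(1-[\overline{F}(x)]^n\right)^{\alpha}[\overline{F}(x)]^{n\beta}\,\text{d}x.
\end{equation*}
Expanding $\left(1-[\overline{F}(x)]^n\right)^{\alpha}$ by (\ref{useful for applications}) (legitimate since $\overline{F}(x)<1$ almost everywhere), interchanging the two operations, and recognising $\int_{l}^{r}[\overline{F}(x)]^{n(j+\beta)}\,\text{d}x=K_X(n(j+\beta))$ through (\ref{CIG}) gives (\ref{G for minimum order statistic}).

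I expect the only delicate point to be the exchange of $\sum$ and $\int$, which I would justify by a Fubini--Tonelli argument applied to $\sum_{i}\left|\binom{\beta}{i}\right|\int_{l}^{r}[F(x)]^{n(i+\alpha)}\,\text{d}x$. When $\beta\ge0$ the binomial coefficients are absolutely summable and, together with the monotonicity (in its argument) and finiteness of $H_X$, make the exchange immediate; for general $\beta$ the finiteness of the resulting series is exactly the condition $(\alpha,\beta)\in D_{X_{(n:n)}}$, and the minimum case is symmetric, with $K_X$ and $D_{X_{(1:n)}}$ playing the corresponding roles. A further point worth recording is the behaviour at the endpoint where $F(x)=1$ (respectively $\overline{F}(x)=1$): this is a null set on which the expansion (\ref{useful for applications}) fails, but it does not affect the value of the integral, so the almost-everywhere validity of the binomial step suffices.
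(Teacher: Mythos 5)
Your argument is essentially identical to the paper's proof: both expand $\bigl(1-[F(x)]^n\bigr)^{\beta}$ (and, symmetrically, $\bigl(1-[\overline{F}(x)]^n\bigr)^{\alpha}$) via the generalized binomial series (\ref{useful for applications}), interchange sum and integral, and recognize the resulting integrals as $H_X(n(i+\alpha))$ and $K_X(n(j+\beta))$ through (\ref{H_X}) and (\ref{CIG}). Your added remarks on justifying the interchange and on the null set where $F(x)=1$ merely make explicit what the paper leaves implicit, so this is the same approach, correctly carried out.
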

\begin{proof} 
	For simplicity, assume that the support of $X$ is $(0, r)$. 
	Recalling Eq.\ (\ref{G_X}), from Eqs.\  (\ref{CDF of maximum order statistic}) and (\ref{useful for applications}) we have 
$$
 G_{X_{(n:n)}}(\alpha,\beta)=\int_{0}^{r}\left[F(x)\right]^{n\alpha}\left[1-[ {F}(x)]^n\right]^{\beta}\,\text{d}x 
	=\sum_{i=0}^{\infty}(-1)^i\binom{\beta}{i}\int_{0}^{r}\left[ {F}(x)\right]^{n(i+\alpha)}\,\text{d}x.
$$
	The right-hand-side of (\ref{G for maximum order statistic}) then follows making use of (\ref{H_X}). 
	Eq.\ (\ref{G for minimum order statistic}) can be obtained similarly. 
\end{proof}
\par
It is well known that a system with $n$ independent components is said to be a \emph{$k$-out-of-$n$ system} when 
it works if and only if at least $k$ components work (see, for instance, Boland and Proschan \cite{Boland:Proschan}).
Clearly, if $k=1$ we have a parallel system, while for $k=n$ we have a series system.  
For any $k$, the lifetime of the $k$-out-of-$n$ system formed by components with  i.i.d.\ lifetimes 
is expressed as the corresponding $k$-th order statistic. This allows to express the reliability and the information 
content of this kind of systems in a tractable way. 
\begin{remark}
	Consider a $k$-out-of-$n$ system formed by $n$ components with  i.i.d.\ random lifetimes, for $n\in\mathbb{N}$. 
	Denoting by $X_{(k:n)}$   the corresponding $k$-th order statistic, which in turn gives the system lifetime, the 
	CIGF of $X_{(k:n)}$ can be expressed in terms of the cumulative  information generating measures. Indeed, similarly as 
	Proposition \ref{prop G of minimum and maximum order statistcs}, recalling Eqs.\ (\ref{H_X}) and  (\ref{CIG}) 
	one has the following two equivalent expressions
	$$
	G_{X_{(k:n)}}(\alpha,\beta)=\sum_{i=0}^{\infty}(-1)^i\binom{\beta}{i}H_{X_{(k:n)}}(i+\alpha),
	$$	
	$$
	G_{X_{(k:n)}}(\alpha,\beta)=\sum_{j=0}^{\infty}(-1)^j\binom{\alpha}{j}K_{X_{(k:n)}}(j+\beta),
	$$	
	for all $(\alpha,\beta)\in D_{X_{(k:n)}}$. Moreover, the mean of $X_{(k:n)}$ 
	can be expressed in terms of the CIGF of $X$ as 
	$$
	\mathbb{E}\left[X_{(k:n)}\right] = \sum_{j=0}^{k-1}\binom{n}{j}\int_{l}^{r}\left[F(x)\right]^j\left[\overline{F}(x)\right]^{n-j}\,{\rm{d}}x
	=\sum_{j=0}^{k-1}\binom{n}{j}\,G_X(j,n-j).
	$$
	Clearly, for $k=1$  we have $\mathbb{E}\left[X_{(1:n)}\right]=G_X(0,n)=K_X(n)$. 
\end{remark}
%

\subsection{Stress-strength models for multi-component systems}\label{section-Stress-strength}
A further connection of the CIGF with systems reliability arises in the analysis of stress-strength models for multi-component systems. 
Let  us consider a system with $n$ components having i.i.d.\  strengths $X_1, X_2, \ldots, X_n$ having common CDF $F(x)$. 
Assume that each component is stressed  according to an independent 
random stress $T$ having   CDF $F_T(x)$. Moreover, suppose that the system survives if and only if the 
components strengths are greater than the stress by at least $k$ out of $n$ ($1\leq k\leq n$). 
Then, the reliability of the considered  multi-component stress-strength system is given by 
(cf.\ Bhattacharyya and Johnson \cite{Bhatta}) 
\begin{eqnarray}
	R_{k,n} & = & \mathbb{P}[\hbox{at least $k$ of the $(X_1, X_2, \ldots, X_n)$ exceed $T$}]
	\nonumber 
	\\
	& = & \sum_{j=k}^n {n\choose j}\int_{-\infty}^{+\infty} [1-F(t)]^j [F(t)]^{n-j}\, {\rm d}F_T(t),
	\label{eq:Rkn}
\end{eqnarray}
with $R_{0,n} =1$. 
See also, for instance, the recent contribution by Kohansal and Shoaee \cite{Kohansal} on 
the statistical inference of multicomponent stress-strength reliability under suitable censored samples. 
\par
For instance, if $T$ is distributed as $X_1$ then it is not hard to see that 
\begin{equation}
	R_{k,n} = 1-\frac {k}{n+1},\qquad 0\leq k\leq n. 
	\label{eq:linRkn}
\end{equation}
The following result is a straightforward consequence of Eqs.\  (\ref{G_X})  and (\ref{eq:Rkn}). 
\begin{proposition}\label{prop:Rkn}
	Let $X_1, X_2, \ldots, X_n,T$ have common finite support $(l,r)$, with $X_1, X_2, \ldots, X_n$ i.i.d. 
	If $T$ is uniformly distributed over $(l,r)$, then 
	$$
	R_{k,n} =\frac{1}{r-l} \sum_{j=k}^n {n\choose j}G_X(n-j,j), \qquad 0\leq k\leq n. 
	$$
\end{proposition}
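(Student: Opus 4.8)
The plan is to directly substitute the assumptions into the reliability formula \eqref{eq:Rkn} and recognize the resulting integrals as instances of the CIGF. First I would note that since $T$ is uniformly distributed over $(l,r)$, its distribution is absolutely continuous with density $f_T(t) = \frac{1}{r-l}$ on $(l,r)$ and zero elsewhere, so that ${\rm d}F_T(t) = \frac{1}{r-l}\,{\rm d}t$ on the support. Because $X_1,\dots,X_n$ and $T$ share the common support $(l,r)$, the integral in \eqref{eq:Rkn} over $\mathbb{R}$ collapses to an integral over $(l,r)$.

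\medskip

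\noindent
Next I would rewrite the integrand. Using $1-F(t) = \overline{F}(t)$, the summand in \eqref{eq:Rkn} becomes
$$
\binom{n}{j}\int_{l}^{r} [\overline{F}(t)]^{j}[F(t)]^{n-j}\,\frac{1}{r-l}\,{\rm d}t
= \frac{1}{r-l}\binom{n}{j}\int_{l}^{r}[F(t)]^{n-j}[\overline{F}(t)]^{j}\,{\rm d}t.
$$
By Definition \ref{def G_X}, the integral on the right is precisely $G_X(n-j,j)$, since the CIGF is $\int_{l}^{r}[F(x)]^{\alpha}[\overline{F}(x)]^{\beta}\,{\rm d}x$ with the identifications $\alpha = n-j$ and $\beta = j$. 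Summing over $j$ from $k$ to $n$ and factoring out the constant $\frac{1}{r-l}$ then yields the claimed formula.

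\medskip

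\noindent
I anticipate no serious obstacle here, as the result is essentially a rewriting of \eqref{eq:Rkn} under the uniform-stress hypothesis; the paper itself characterizes it as ``a straightforward consequence'' of Eqs.\ \eqref{G_X} and \eqref{eq:Rkn}. The only point deserving minor care is the finiteness and well-definedness of each $G_X(n-j,j)$: since $n-j\ge 0$ and $j\ge 0$ are nonnegative integers and the support $(l,r)$ is finite, the integrand $[F]^{n-j}[\overline{F}]^{j}$ is bounded between $0$ and $1$, so each integral is finite and $(n-j,j)\in D_X$ automatically. I would also observe that the boundary case $k=0$ gives $R_{0,n}=\frac{1}{r-l}\sum_{j=0}^{n}\binom{n}{j}G_X(n-j,j)$, which is consistent with $R_{0,n}=1$ because $\sum_{j=0}^{n}\binom{n}{j}[F]^{n-j}[\overline{F}]^{j} = [F+\overline{F}]^{n} = 1$ by the binomial theorem, so the sum integrates to $r-l$; this provides a useful sanity check on the formula.
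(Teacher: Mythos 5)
Your proof is correct and follows exactly the route the paper intends: the paper offers no written-out argument beyond declaring the result ``a straightforward consequence'' of Eqs.\ (\ref{G_X}) and (\ref{eq:Rkn}), and your substitution ${\rm d}F_T(t)=\frac{1}{r-l}\,{\rm d}t$ followed by the identification $\alpha=n-j$, $\beta=j$ is precisely that consequence. The added finiteness remark and the $k=0$ consistency check are correct and harmless extras.
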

\par
An iterative formula allows us to evaluate  the reliability of the    multi-component stress-strength system 
as follows, under the assumptions of Proposition \ref{prop:Rkn}: 
$$
R_{k+1,n} =R_{k,n} -\frac{1}{r-l}  {n\choose k}G_X(n-k,k), \qquad 0\leq k\leq n-1. 
$$
\par
As example, if $X$ has Power$(\theta)$ distribution then from Proposition \ref{prop:Rkn} and Table \ref{tableCIGF} 
after few calculations one has 
\begin{equation}
	R_{k,n} =\frac{ \Gamma(n + 1)\,\Gamma(n - k + 1 + \frac{1}{\theta})}
	{\Gamma(n + 1 + \frac{1}{\theta})\,\Gamma(n - k + 1)},
	\qquad 0\leq k\leq n, \quad \theta>0.
	\label{eq:powerRkn}
\end{equation}
Note that the expression in (\ref{eq:powerRkn}) can be also represented as a ratio of Pochhammer symbols, 
or as an infinite product (cf.\ Eq.\ 8.325.1 of Gradshteyn and  Ryzhik \cite{Gradshteyn}). Clearly, 
if $\theta=1$ then  Eq.\ (\ref{eq:powerRkn}) reduces to Eq.\ (\ref{eq:linRkn}). 
Figure \ref{Fig:Rkn} shows some plots of  $R_{k,n}$ as given in (\ref{eq:powerRkn}). 
\begin{figure}[t] 
	\begin{center}
		\includegraphics[scale=0.4]{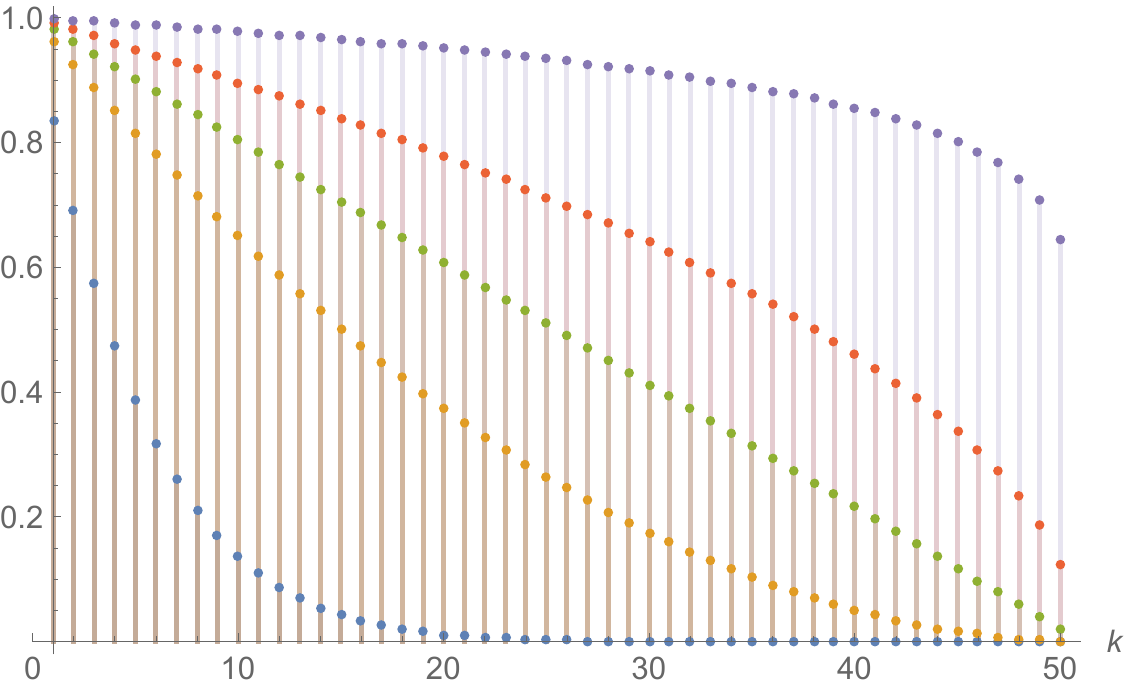}
		\,
		\includegraphics[scale=0.4]{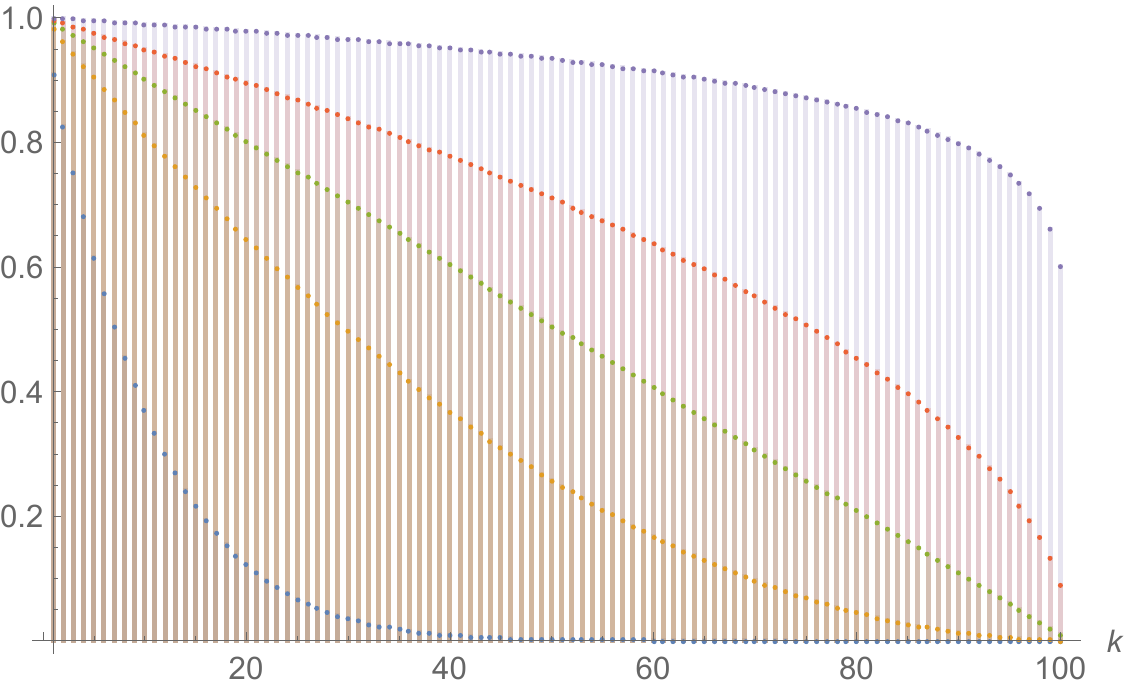}
	\end{center}
	\caption{Plots of the reliability of the  multi-component stress-strength system with underlying Power$(\theta)$ distribution 
		as in Eq.\ (\ref{eq:powerRkn}) for $0\leq k\leq n$, with $n=50$ (left) and $n=100$ (right), with $\theta=0.1, 0.5, 1, 2, 10$ (from bottom to top).}
	\label{Fig:Rkn}
\end{figure}

\section{Generalized Gini functions}\label{sect:Generalizations}
This section is devoted to the analysis of a generalized version of the CIGF. 
Specifically, we aim to extend the Definition \ref{def G_X} to the case in which the powers included in the 
right-hand-side of Eq.\  (\ref{G_X}) are replaced by suitable distortion functions. 
In this case, recalling Remark \ref{rem:Gini}, we come to an extension of the Eq.\ (\ref{eq:GiniX}). 
\par
Let $X$ be a random variabile with CDF $F$ and SF $\overline{F}$, and let 
$q_i:[0,1]\rightarrow[0,1]$ be two {\em distortion functions}, for $i=1,2$, 
i.e.\ increasing functions such that $q_i(0)=0$ and $q_i(1)=1$ (cf.\ Section 2.9.2 of Belzunce et al.\ \cite{Belzunce:2016} 
and Section 2.4 of Navarro \cite{Navarro:2022}). In some applications it is required that $q_i$ is continuous or left-continuous, however 
these assumptions are not necessarily required in general. 
The distorted distribution function and the distorted survival function of $F$ through $q_i$ are given respectively by
\begin{equation}
F_{q_1}(x)=q_1\left(F(x)\right), \qquad \overline{F}_{q_2}(x)=q_2\left( \overline F(x)\right), 
\qquad\forall x\in\mathbb{R}.
 \label{eq:distfunct}
\end{equation}
From Eq.\ (\ref{eq:distfunct}), in general one has $F_{q_1}(x)+\overline{F}_{q_2}(x)\neq 1$, for all $x\in(l,r)$, 
unless $q_1(u)=1-q_2(1-u)$. 
The functions given in  (\ref{eq:distfunct}) have been introduced in the context of the
theory of choice under risk (see Wang \cite{Wang:1996}), and are largely used in various applied fields  (for instance, see Sordo and  Su\'arez-Llorens  \cite{Sordo:2011} 
for applications to variability measures). 
\par 
Let us now consider the preannounced generalization of the CIGF based on  (\ref{eq:distfunct}).
\begin{definition}\label{GCIGF}
	Let $X$ be a  random variable with CDF $F(x)$ and SF $\overline{F}(x)$, $x\in\mathbb{R}$, and let  
\begin{equation*}
	l=\inf\{x\in\mathbb{R} : F(x)>0\}, \qquad r=\sup\{x\in\mathbb{R} : \overline{F}(x)>0\}.  
\end{equation*}
The  {\bf q}-distorted Gini function (or, shortly, {\bf q}-Gini function)  of $X$ is defined as
\begin{equation}\label{hatGX}	
		\hat{G}_X({\bf q})=\int_{l}^{r} F_{q_1}(x) \overline{F}_{q_2}(x)\,{\rm{d}}x,	
\end{equation}
where ${\bf q}=(q_1,q_2)$, and   $q_i:[0,1]\rightarrow[0,1]$, $i=1,2$,  are  distortion functions such that 
$\hat{G}_X({\bf q})$ is finite. 
\end{definition}
\par
Clearly, if the distortion functions are taken as $q_1(x)=x^\alpha$ and $q_2(x)=x^\beta$ with $(\alpha,\beta)\in D_X$, then 
Eq.\ (\ref{hatGX}) corresponds to the definition of the CIGF given in Eq.\ (\ref{G_X}). Specifically, if $\alpha=\beta=1$    
then we recover the Gini mean semi-difference (\ref{eq:GiniX}). 
\par
It is worth mentioning that the {\bf q}-Gini function may be viewed as an extension of the distorted measures treated in 
Giovagnoli and Wynn \cite{Giovagnoli} and in Greselin and Zitikis \cite{Greselin}. In these papers, distortions of the only CDF or SF (which can be viewed as generalizations of Eqs.\ (\ref{H_X}) and (\ref{CIG})), are 
considered for the analysis of stochastic dominance, Lorenz ordering and risk measures. 
\par
Hereafter we shall prove various results regarding the {\bf q}-Gini function. 
They include the effect of an affine transformation of $X$ and the pointwise ordering of the {\bf q}-Gini functions. 
To this aim, we recall that, if $X$ and $Y$ are nonnegative  random variables with CDFs $F$ and $G$, respectively, 
then  $X$ is said to be smaller than $Y$ in the dispersive order, denoted as $X\le_d Y$, if and only if 
(see Section 3.B of Shaked and Shanthikumar \cite{shaked:2007}) 
\begin{equation*}
	F^{-1}(v)-F^{-1}(u) \ge G^{-1}(v)- G^{-1}(u)\qquad
	\hbox{whenever }0<u \leq v<1. 
\end{equation*}
Among the variability stochastic orders, the dispersive order is one of the most popular, since it involves 
quantities that are easily tractable, requiring that the difference
between any two quantiles of $X$ is smaller than the corresponding quantity of $Y$. Moreover, 
if $X$ and $Y$ are absolutely continuous with PDFs $f$ and $g$, respectively, then  
\begin{equation}\label{eq:disp}
	X\le_d Y     \quad \hbox{if and only if} \quad f(F^{-1}(u))\ge g(G^{-1}(u))\quad\forall \, u\in(0,1).
\end{equation}
We can now prove that, under suitable assumptions, the {\bf q}-Gini function is a variability measure 
in the sense of Bickel and Lehmann \cite{Bickel:Lehmann}.
%

\begin{theorem}\label{th:confr}
Let $X$ and $Y$ be random variables having the same support, 
and let  ${\bf q}=(q_1,q_2)$, where  $q_i:[0,1]\rightarrow[0,1]$, $i=1,2$,  be  distortion functions such that 
$\hat{G}_X({\bf q})$ and $\hat{G}_Y({\bf q})$  are finite. 
Then, the following properties hold:
	\begin{enumerate}
		\item $\hat{G}_{X+\delta}({\bf q})=\hat{G}_X({\bf q})$ for all $\delta\in \mathbb R$;
		\item $\hat{G}_{\gamma X}({\bf q})=\gamma\hat{G}_X({\bf q})$ for all $\gamma\in \mathbb{R}^+$;
		\item $\hat{G}_X({\bf q})=0$ for any degenerate random variable $X$;
		\item $\hat{G}_X({\bf q})\ge0$ for any random variable $X$;
		\item $X\le_d Y$ implies $\hat{G}_X({\bf q})\le \hat{G}_Y({\bf q})$.
	\end{enumerate}
	\end{theorem}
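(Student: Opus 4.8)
The plan is to settle the five items in order of increasing difficulty, handling (1)--(4) by direct manipulation of the defining integral (\ref{hatGX}) and reserving the substantive work for (5). Nonnegativity (4) is immediate: since $q_1,q_2$ take values in $[0,1]$, the integrand $F_{q_1}(x)\overline{F}_{q_2}(x)$ is nonnegative, and hence so is the integral. For a degenerate $X$ (3), at every $x$ either $F(x)=0$, whence $q_1(F(x))=q_1(0)=0$, or $\overline{F}(x)=0$, whence $q_2(\overline{F}(x))=q_2(0)=0$; the integrand vanishes identically and so does $\hat{G}_X(\mathbf{q})$. Items (1) and (2) follow from the changes of variables $y=x+\delta$ and $y=\gamma x$ in (\ref{hatGX}), mirroring the proof of Theorem \ref{theorematransformation}: under a positive affine map the distorted CDF and SF are merely shifted or rescaled copies of the originals, so the integral is invariant under translation and scales by the factor $\gamma$ under dilation.

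The core is item (5). First I would pass to an integral over the unit interval. Assuming $X$ absolutely continuous with PDF $f$, the substitution $u=F(x)$ (the one yielding (\ref{CIGF in Remark 1})) recasts (\ref{hatGX}) as
\[
\hat{G}_X(\mathbf{q})=\int_{0}^{1} q_1(u)\,q_2(1-u)\,\frac{1}{f(F^{-1}(u))}\,\mathrm{d}u,
\]
and similarly $\hat{G}_Y(\mathbf{q})=\int_{0}^{1} q_1(u)\,q_2(1-u)\,\frac{1}{g(G^{-1}(u))}\,\mathrm{d}u$ when $Y$ has PDF $g$ and CDF $G$. Crucially, the weight $q_1(u)q_2(1-u)$ is the same for both and is nonnegative, so the comparison reduces to the two reciprocal density--quantile factors. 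It then suffices to invoke the characterization (\ref{eq:disp}): $X\le_d Y$ is equivalent to $f(F^{-1}(u))\ge g(G^{-1}(u))$ for all $u\in(0,1)$, hence $1/f(F^{-1}(u))\le 1/g(G^{-1}(u))$. Multiplying by the common nonnegative weight and integrating over $(0,1)$ gives $\hat{G}_X(\mathbf{q})\le\hat{G}_Y(\mathbf{q})$.

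The main obstacle is that the density--quantile form (\ref{eq:disp}) presupposes absolute continuity, so the argument above is cleanest in that setting; to cover arbitrary $X,Y$ with the same support I would instead start from the quantile definition of $\le_d$ (equivalently, that $G^{-1}-F^{-1}$ is nondecreasing) and use a Lebesgue--Stieltjes version of the substitution $u=F(x)$, which demands more bookkeeping but introduces no new idea. A minor point to verify along the way is the finiteness of the transformed integrals, which is guaranteed by the standing hypothesis that $\hat{G}_X(\mathbf{q})$ and $\hat{G}_Y(\mathbf{q})$ are finite.
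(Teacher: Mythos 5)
Your proposal is correct and follows essentially the same route as the paper's proof: items 1--4 by direct manipulation of the defining integral and the affine relation between the CDFs, and item 5 by the quantile substitution $u=F(x)$ (as in Eq.~(\ref{CIGF in Remark 1})) combined with the density--quantile characterization (\ref{eq:disp}) of the dispersive order applied under the common nonnegative weight $q_1(u)q_2(1-u)$. Your added remark that (\ref{eq:disp}) presupposes absolute continuity is a fair point of care which the paper's proof leaves tacit, but it does not change the argument.
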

\begin{proof}
The properties 1 and 2 follow recalling  Eq.\ (\ref{hatGX}) 
and  the relation between the CDFs of $X$ and $Y=\gamma X + \delta$. 
The properties 3 and 4 are  guaranteed by the Definition \ref{GCIGF}. 
In analogy to  Eq.\ (\ref{CIGF in Remark 1}), we can write 
	\begin{equation*}
		\hat{G}_Y({\bf q})-\hat{G}_X({\bf q})
		=\int_{0}^{1}q_1(u)q_2(1-u)\left[
		\frac{1}{g(G^{-1}(u))}-\frac{1}{f(F^{-1}(u))}\right]\,\text{d}u.
	\end{equation*}
Hence, due to relation (\ref{eq:disp}) the property 5 immediately follows.
\end{proof}
\par
In addition, we remark that if $\hat{G}_X({\bf q})=0$ and if 
$q_i(0^+)>0$ and $q_i(1^-)<1$ for $i=1,2$, then $X$ is necessarily a degenerate random variable.
\par
It is worth mentioning that the results given in this section can be further extended. Indeed, in the same conditions given in the 
Definition \ref{GCIGF}, by taking as reference the approach in \cite{Giovagnoli} we   introduce the 
{\em weighted {\bf q}-distorted Gini function} (or, shortly, {\em weighted {\bf q}-Gini function}) as follows:
 \begin{equation}\label{extendhatGX}	
		\hat{G}_X({\bf q},F_T)=\int_{\Delta} F_{q_1}(x) \overline{F}_{q_2}(x)\,{\rm{d}}F_T(x),	
\end{equation}
where $F_T(x)$ is the CDF of a random variable $T$, 
and where the intersection of the supports of $X$ and $T$ is a non-empty set  denoted by $\Delta$. 
\par
It is not hard to see that the function given in (\ref{extendhatGX}) satisfies the  properties 1--4 given 
in Theorem \ref{th:confr} for $\hat{G}_X({\bf q})$. Concerning the property 5, hereafter we see that 
additional assumptions are needed. Here, $l_X$, $r_X$ and $l_Y$, $r_Y$ are defined as in (\ref{eq:deflr}) 
for $X$ and $Y$, respectively. 
\begin{theorem}\label{th:wconfr}
Let $X$ and $Y$ be random variables having the same support, 
let  ${\bf q}=(q_1,q_2)$, where  $q_i:[0,1]\rightarrow[0,1]$, $i=1,2$,  be  distortion functions such that 
$\hat{G}_X({\bf q},F_T)$ and $\hat{G}_Y({\bf q},F_T)$ are finite, and let $T$ be absolutely continuous with PDF $f_T$. If 
\\
(i) $f_T(t)$ is increasing in $t$ and $-\infty<l_X=l_Y$, or if 
\\
(ii) $f_T(t)$ is decreasing in $t$ and $ r_X=r_Y<\infty$, then 
\begin{equation}\label{eq:thXdY}
 X\le_d Y \quad\text{implies}\quad  \hat{G}_X({\bf q},F_T)\le \hat{G}_Y({\bf q},F_T).
\end{equation}
\end{theorem}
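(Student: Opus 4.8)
The plan is to mirror the proof of Theorem \ref{th:confr}: reduce both weighted Gini functions to integrals over the unit interval by the substitution $u=F(x)$ (respectively $u=G(x)$), and then compare the two integrands pointwise. Assuming, exactly as in Theorem \ref{th:confr}, that $X$ and $Y$ are absolutely continuous with densities $f,g$ that are positive on their common support, the change of variable in (\ref{extendhatGX}) gives
\begin{equation*}
	\hat{G}_X({\bf q},F_T)=\int_{0}^{1}q_1(u)\,q_2(1-u)\,\frac{f_T(F^{-1}(u))}{f(F^{-1}(u))}\,{\rm d}u,
\end{equation*}
together with the analogous expression for $Y$ obtained by replacing $f,F^{-1}$ with $g,G^{-1}$. (The factor $f_T(F^{-1}(u))$ vanishes automatically whenever $F^{-1}(u)$ leaves the support of $T$, so the integral effectively sees only $\Delta$.) Subtracting, and recalling that $q_1(u)q_2(1-u)\ge0$ since $q_i:[0,1]\to[0,1]$, it suffices to show that the bracket in
\begin{equation*}
	\hat{G}_Y({\bf q},F_T)-\hat{G}_X({\bf q},F_T)
	=\int_{0}^{1}q_1(u)\,q_2(1-u)\left[\frac{f_T(G^{-1}(u))}{g(G^{-1}(u))}-\frac{f_T(F^{-1}(u))}{f(F^{-1}(u))}\right]{\rm d}u
\end{equation*}
is nonnegative for every $u\in(0,1)$.

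I would obtain this sign from two pointwise inequalities whose product yields the claim. The first is the one already used in Theorem \ref{th:confr}: the characterization (\ref{eq:disp}) of the dispersive order gives $f(F^{-1}(u))\ge g(G^{-1}(u))>0$, hence $1/g(G^{-1}(u))\ge 1/f(F^{-1}(u))>0$. The second inequality controls the ordering of the quantiles and couples it to the monotonicity of $f_T$. Rewriting (\ref{eq:disp}) as $(G^{-1})'(u)=1/g(G^{-1}(u))\ge 1/f(F^{-1}(u))=(F^{-1})'(u)$ shows that $u\mapsto G^{-1}(u)-F^{-1}(u)$ is nondecreasing on $(0,1)$. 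Under hypothesis (i), letting $u\to0^+$ gives $G^{-1}(u)-F^{-1}(u)\to l_Y-l_X=0$, so this nondecreasing difference is nonnegative, whence $G^{-1}(u)\ge F^{-1}(u)$; since $f_T$ is increasing, $f_T(G^{-1}(u))\ge f_T(F^{-1}(u))\ge0$. Under hypothesis (ii), letting $u\to1^-$ gives $G^{-1}(u)-F^{-1}(u)\to r_Y-r_X=0$, so the difference is nonpositive, whence $G^{-1}(u)\le F^{-1}(u)$; since $f_T$ is decreasing, again $f_T(G^{-1}(u))\ge f_T(F^{-1}(u))\ge0$.

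In either case both factors attached to the $Y$-term dominate their $X$-counterparts and are nonnegative, so
\begin{equation*}
	\frac{f_T(G^{-1}(u))}{g(G^{-1}(u))}=f_T(G^{-1}(u))\cdot\frac{1}{g(G^{-1}(u))}\ge f_T(F^{-1}(u))\cdot\frac{1}{f(F^{-1}(u))}=\frac{f_T(F^{-1}(u))}{f(F^{-1}(u))},
\end{equation*}
the bracket is nonnegative, and (\ref{eq:thXdY}) follows. The main obstacle is the second ingredient: neither the dispersive order nor the monotonicity of $f_T$ alone fixes the sign of the quantile difference $G^{-1}(u)-F^{-1}(u)$, and it is precisely the finiteness of the matched endpoint---$l_X=l_Y>-\infty$ in (i), $r_X=r_Y<\infty$ in (ii)---that pins down the limit of this nondecreasing difference at the relevant edge of $(0,1)$, and hence its sign throughout. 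This also explains why the hypotheses pair an increasing $f_T$ with a finite left endpoint and a decreasing $f_T$ with a finite right endpoint: the direction of the quantile domination must match the direction of monotonicity of $f_T$ so that the two density factors reinforce, rather than oppose, one another.
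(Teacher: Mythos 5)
Your proof is correct and follows essentially the same route as the paper: the identical change of variable $u=F(x)$ yielding the same bracketed integrand over $(0,1)$, then the same two pointwise inequalities --- $1/g(G^{-1}(u))\ge 1/f(F^{-1}(u))$ from (\ref{eq:disp}) and the quantile domination $G^{-1}(u)\ge F^{-1}(u)$ (reversed under (ii)) combined with the monotonicity of $f_T$. The only difference is cosmetic: where the paper obtains the quantile domination under (i) by citing Theorem 3.B.13 of Shaked and Shanthikumar and treats (ii) as ``similar'', you inline that lemma's proof, showing $G^{-1}-F^{-1}$ is nondecreasing and pinned to zero at the matched finite endpoint, which makes the argument self-contained and makes explicit why each monotonicity hypothesis on $f_T$ is paired with the corresponding finite endpoint.
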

\begin{proof}
Due to (\ref{extendhatGX}), by setting $u = F (x)$  one has 
 \begin{equation*}
		\hat{G}_Y({\bf q},F_T)-\hat{G}_X({\bf q},F_T)
		=\int_{0}^{1}q_1(u)q_2(1-u)\left[
		\frac{f_T(G^{-1}(u))}{g(G^{-1}(u))}-\frac{f_T(F^{-1}(u))}{f(F^{-1}(u))}\right]\,\text{d}u.
 \end{equation*}
Then, under assumption (i), from Theorem 3.B.13 of  \cite{shaked:2007} we have that 
assumption $X\le_d Y$ implies $X\le_{st} Y$, 
i.e.\ $F(x)\geq G(x)$ for all $x\in \mathbb R$, so that $G^{-1}(u)\geq F^{-1}(u)$ for all $u\in (0,1)$. The relation (\ref{eq:thXdY}) thus 
follows from (\ref{eq:disp}). The same result can be proved similarly under assumption (ii). 
\end{proof}
An immediate application of  Theorem \ref{th:wconfr} can be given to the 
reliability of  multi-component stress-strength systems, as seen in Section \ref{section-Stress-strength}.
Consider two $n$-component systems, the first  having i.i.d.\  strengths $X_1, X_2, \ldots, X_n$ distributed as $X$, 
and the second having i.i.d.\  strengths $Y_1, Y_2, \ldots, Y_n$ distributed as $Y$. 
Assume that each component of both systems is stressed  according to an independent 
random stress $T$ having   CDF $F_T(x)$. We denote by $R_{k,n}^X$ and $R_{k,n}^Y$   
the reliability of the corresponding multi-component stress-strength systems defined as in 	(\ref{eq:Rkn}). 
We are now able to provide a comparison result based on the weighted {\bf q}-Gini function. 
\begin{theorem}
Let the strengths $X$ and $Y$ have the same support, and let the random stress $T$ be absolutely continuous 
with PDF $f_T$. If 
\\
(i) $f_T(t)$ is increasing in $t$ and $-\infty<l_X=l_Y$, or if 
\\
(ii) $f_T(t)$ is decreasing in $t$ and $ r_X=r_Y<\infty$, then 
\begin{equation*} 
 X\le_d Y \quad\text{implies}\quad  R_{k,n}^X\leq R_{k,n}^Y  \quad  \text{for all}\quad   0\leq k\leq n,
\end{equation*}
provided that $R_{k,n}^X$ and $R_{k,n}^Y$ are finite. 
\end{theorem}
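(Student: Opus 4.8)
The plan is to recognize $R_{k,n}^X$ and $R_{k,n}^Y$ as nonnegative linear combinations of weighted $\mathbf{q}$-Gini functions and then to compare them termwise by Theorem \ref{th:wconfr}. First I would rewrite the reliability \eqref{eq:Rkn}: since $1-F(t)=\overline F(t)$, the integrand of the $j$-th summand is $[\overline F(t)]^{j}[F(t)]^{n-j}$, which is exactly $F_{q_1}(t)\,\overline F_{q_2}(t)$ for the pair $\mathbf{q}_j=(q_1,q_2)$ with $q_1(u)=u^{n-j}$ and $q_2(u)=u^{j}$. Consequently, recalling \eqref{extendhatGX}, both systems sharing the same stress CDF $F_T$ and the strengths sharing the same support, one obtains
\begin{equation*}
R_{k,n}^X=\sum_{j=k}^{n}\binom{n}{j}\,\hat G_X(\mathbf{q}_j,F_T),\qquad R_{k,n}^Y=\sum_{j=k}^{n}\binom{n}{j}\,\hat G_Y(\mathbf{q}_j,F_T).
\end{equation*}
The case $k=0$ is immediate, since $R_{0,n}^X=R_{0,n}^Y=1$.

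Next, for each fixed $j$ with $k\le j\le n$, I would apply Theorem \ref{th:wconfr} to the pair $\mathbf{q}_j$: under hypothesis (i) or (ii), the relation $X\le_d Y$ yields $\hat G_X(\mathbf{q}_j,F_T)\le \hat G_Y(\mathbf{q}_j,F_T)$ by \eqref{eq:thXdY}. Multiplying each of these inequalities by the nonnegative weight $\binom{n}{j}$ and summing over $j=k,\dots,n$ immediately gives $R_{k,n}^X\le R_{k,n}^Y$, as desired; finiteness of the two reliabilities guarantees that these rearrangements are legitimate.

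The one point that needs care, and which I expect to be the main obstacle, is that when $j=n$ the function $q_1(u)=u^{0}\equiv 1$ fails the distortion requirement $q_1(0)=0$, so Theorem \ref{th:wconfr} does not apply to this summand verbatim. I would resolve this by inspecting the proof of Theorem \ref{th:wconfr}: the only structural property of $q_1,q_2$ actually used there is the nonnegativity of the weight $q_1(u)\,q_2(1-u)$ appearing in the integral representation of $\hat G_Y(\mathbf{q},F_T)-\hat G_X(\mathbf{q},F_T)$, while the sign of the bracketed factor is controlled solely by \eqref{eq:disp} together with the monotonicity of $f_T$ and the stochastic ordering implied by the common endpoint assumption. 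Since $u^{n-j}(1-u)^{j}\ge 0$ on $[0,1]$ for every $0\le j\le n$, the termwise inequality $\hat G_X(\mathbf{q}_j,F_T)\le \hat G_Y(\mathbf{q}_j,F_T)$ remains valid for $j=n$ as well, closing the argument.
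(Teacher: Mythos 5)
Your proof follows essentially the same route as the paper's: the paper likewise starts from Eq.~(\ref{eq:Rkn}) and applies Theorem~\ref{th:wconfr} with precisely the distortions $q_1(u)=u^{n-j}$, $q_2(u)=u^{j}$ for $k\le j\le n$, the conclusion then following by summing the termwise inequalities with the nonnegative weights $\binom{n}{j}$. Your extra care about the boundary case $j=n$ (where $q_1\equiv 1$ fails $q_1(0)=0$, but the proof of Theorem~\ref{th:wconfr} only uses the nonnegativity of $q_1(u)\,q_2(1-u)$) is correct and in fact patches a detail that the paper's one-line proof silently glosses over.
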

\begin{proof}
The thesis follows recalling Eq.\ (\ref{eq:Rkn}) and making use of Theorem \ref{th:wconfr} when 
the relevant distortions are given by $q_1(u)=u^{n-j}$ and $q_2(u)=u^{j}$, with $k\leq j\leq n$. 
\end{proof}
\par
In the last result of this section, thanks to the probabilistic analogue of the mean value theorem, 
we provide a suitable expression of the weighted {\bf q}-Gini function in the special case  when 
the related distortion functions are equal to the identity. 
\begin{proposition}
Let $X$ be a nondegenerate random variable such that $\mathbb{E}(\min\{X,X'\})$ and $\mathbb{E}(\max\{X,X'\})$ 
are finite, where $X'$ is an independent copy of $X$. 
For the weighted {\bf q}-Gini function introduced in Eq.\ (\ref{extendhatGX}), 
if $T$ is absolutely continuous with the same support of $X$, and if 
$q_i(u)=u$, $0\leq u\leq 1$, for $i=1,2$, then  
   \begin{equation}\label{extendhatGX_prop}	
	\hat{G}_X({\bf q},F_T)=\frac{1}{2}\,\mathbb{E}\left[F_T(\max\{X,X'\})-F_T(\min\{X,X'\})\right].
   \end{equation}
\end{proposition}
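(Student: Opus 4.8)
The plan is to apply the probabilistic analogue of the mean value theorem (PMVT) to the function $g=F_T$ and to the pair of extremes $m=\min\{X,X'\}$ and $M=\max\{X,X'\}$ built from $X$ and its independent copy $X'$. Since $q_i(u)=u$ for $i=1,2$, Definition \ref{GCIGF} and Eq.\ (\ref{extendhatGX}) reduce to $\hat{G}_X(\mathbf{q},F_T)=\int_{\Delta}F(x)\overline{F}(x)\,\mathrm{d}F_T(x)$, so the whole task is to identify the right-hand side of (\ref{extendhatGX_prop}) with twice this integral.

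First I would record the laws of the two extremes: one has $F_M(z)=[F(z)]^2$ and $F_m(z)=1-[\overline{F}(z)]^2$, whence $F_m(z)-F_M(z)=1-[\overline{F}(z)]^2-[F(z)]^2=2F(z)\overline{F}(z)\ge 0$. In particular $m\le_{st}M$, and integrating this difference over $\mathbb{R}$ yields $\mathbb{E}[M]-\mathbb{E}[m]=2\int F(z)\overline{F}(z)\,\mathrm{d}z=\mathbb{E}[|X-X'|]$, which is finite and strictly positive because $X$ is nondegenerate with $\mathbb{E}[m],\mathbb{E}[M]<\infty$. This positivity together with the nonnegativity of $F_m-F_M$ are precisely what make the PMVT applicable and what guarantee that the associated density below is legitimate.

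Next I would invoke the PMVT with $g=F_T$, so that $g'=f_T$ since $T$ is absolutely continuous. The theorem gives $\mathbb{E}[F_T(M)]-\mathbb{E}[F_T(m)]=\mathbb{E}[f_T(Z)]\,(\mathbb{E}[M]-\mathbb{E}[m])$, where $Z$ is absolutely continuous with density $f_Z(z)=\dfrac{F_m(z)-F_M(z)}{\mathbb{E}[M]-\mathbb{E}[m]}=\dfrac{2F(z)\overline{F}(z)}{\mathbb{E}[M]-\mathbb{E}[m]}$. Substituting this density and cancelling the factor $\mathbb{E}[M]-\mathbb{E}[m]$ collapses the right-hand side to $\int f_T(z)\,2F(z)\overline{F}(z)\,\mathrm{d}z=2\int_{\Delta}F(z)\overline{F}(z)\,\mathrm{d}F_T(z)=2\,\hat{G}_X(\mathbf{q},F_T)$; dividing by $2$ then yields (\ref{extendhatGX_prop}).

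I expect the only delicate point to be the verification of the PMVT hypotheses rather than any computation: one must check that $\mathbb{E}[F_T(m)]$ and $\mathbb{E}[F_T(M)]$ are finite (immediate, since $F_T\le 1$), that $\mathbb{E}[m]<\mathbb{E}[M]$ with both finite (granted by the assumptions and by nondegeneracy), and that $f_T=g'$ is integrable against the weight $f_Z$, which holds since $f_T$ is a density. The hypothesis that $T$ shares the support of $X$ is what ensures the integral is genuinely over $\Delta$. As an alternative route that sidesteps the PMVT, I note one could write $F_T(M)-F_T(m)=\int \mathbf{1}(m<t\le M)\,f_T(t)\,\mathrm{d}t$, apply Fubini, and use $\mathbb{P}(m<t\le M)=2F(t)\overline{F}(t)$; this reproduces the same identity and serves as a useful consistency check.
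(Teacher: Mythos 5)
Your proof is correct and follows essentially the same route as the paper, which also invokes the probabilistic analogue of the mean value theorem (Theorem 4.1 of Di Crescenzo, 1999) with $g(\cdot)=F_T(\cdot)$ applied to the pair $\min\{X,X'\}\le_{st}\max\{X,X'\}$, using $F_m(z)-F_M(z)=2F(z)\overline{F}(z)$ and $\mathbb{E}[M]-\mathbb{E}[m]=\mathbb{E}\left[\vert X-X'\vert\right]>0$. You merely make explicit the verification of the PMVT hypotheses (which the paper leaves implicit, noting only that the theorem must be extended to general supports), and your Fubini-based consistency check is a sound, slightly more robust variant of the same identity.
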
 
\begin{proof}
	The proof follows  making use of Eq.\ (\ref{extendhatGX}) 
	and   Theorem 4.1 in Di Crescenzo \cite{dicre:1999} extended to the case of a general support of $X$, 
	by taking $Z=\Psi(\min\{X,X'\},\max\{X,X'\})$ and   $g(\cdot)=F_T(\cdot)$. 
\end{proof}
We immediately note that Eq.\ (\ref{extendhatGX_prop}) generalizes Eq.\ (\ref{eq:GiniX}) in this proposed context. 

\section{Two-dimensional cumulative information generating function}\label{section bidimensional CIGF}
Let us now extend the analysis of the CIGF to the case of a two-dimensional random vector.
In analogy with Definition \ref{def G_X}, by avoiding trivial degenerate cases, we introduce the following 
\begin{definition}\label{def G_(X,Y)}
	Let $(X,Y)$ be random vector with nondegenerate components, having joint CDF and SF given respectively by 
	$$
	F(x,y)=P(X\le x,Y\le y), \qquad \overline{F}(x,y)=P(X>x,Y>y), \qquad (x,y)\in\mathbb{R}^2. 
	$$
	We consider the following domain 
	$$
	\mathcal{S}_{(X,Y)}=\{(x,y)\in\mathbb{R}^2 : F(x,y)\overline{F}(x,y)>0\}.
	$$
	The  CIGF of $(X,Y)$  is defined as:
	$$
	\begin{aligned}
		G_{(X,Y)}:&\:D_{(X,Y)}\subseteq \mathbb{R}^{2} \longrightarrow (0,+\infty)\\
		&\qquad(\alpha,\beta)\;\quad\;\longmapsto \;G_{(X,Y)}(\alpha,\beta)= \iint_{\mathcal{S}_{(X,Y)}}[F(x,y)]^{\alpha}\:[\overline{F}(x,y)]^{\beta}\,{\rm{d}}x\,{\rm{d}}y
	\end{aligned}
	$$
	where
	$$
	D_{(X,Y)}=\{(\alpha,\beta)\in\mathbb{R}^2 : G_{(X,Y)}(\alpha,\beta)<+\infty\}.
	$$
\end{definition}
\par
We first discuss few examples. The first example is stimulated by  the fact that if 
$X\sim \text{Bernoulli}\left(\frac{1}{2}\right)$, then  $G_X(\alpha,\beta)=\left(\frac{1}{2}\right)^{\alpha+\beta}$ 
(cf.\  Table \ref{tableCIGF}), thus satisfying the symmetry conditions expressed in Remark \ref{rem:symm}.   
\begin{example}\label{discrete example biCIGF}
	Let $(X,Y)$ be a discrete random vector, with probability function 
	\begin{equation*}
		\mathbb{P}(X=x,Y=y)=
		\begin{cases}
			\frac14+\theta\qquad(x,y)\in \{(0,0),(1,1)\}\\
			\frac14-\theta\qquad(x,y)\in \{(0,1),(1,0)\}\\
			0\qquad\qquad\text{otherwise}
		\end{cases}
	\end{equation*}
	for $\theta\in\left(-\frac14,\frac14\right)$. Therefore the CDF and the SF are identical, given by 
	$F(x,y)=\overline{F}(x,y)=\frac14+\theta$,
	for $(x,y)\in \mathcal{S}_{(X,Y)}=[0,1]^2$. 
	Hence, from Definition \ref{def G_(X,Y)} the CIGF of $(X,Y)$ is 
	\begin{equation*}
		G_{(X,Y)}(\alpha,\beta)
		=\left(\frac14+\theta\right)^{\alpha+\beta},
		\qquad (\alpha,\beta)\in D_{(X,Y)}=\mathbb{R}^2.
	\end{equation*}
\end{example}
\begin{example}
	Let $(X,Y)$ be an absolutely continuous random vector, uniformly distributed in the triangular domain 
	$\mathcal{T}=\left\lbrace (x,y)\in\mathbb{R}^2: 0\le x\le 1,\,  0\le y\le 1-x\right\rbrace$. 
	The PDF,   CDF and SF are given respectively by 
	\begin{equation*}
		f(x,y)=2, \qquad 
		F(x,y)=2xy, \qquad 
		\overline{F}(x,y)=(1-x-y)^2, 
		\qquad 
		\hbox{for $(x,y)\in \mathcal{T}$.}
	\end{equation*}
	In this case $\mathcal{S}_{(X,Y)}=\mathcal{T}$, so that the CIGF of $(X,Y)$ is 
	\begin{equation*}
		\begin{aligned}
			G_{(X,Y)}(\alpha,\beta)&= \int_{0}^{1}\int_{0}^{1-x}\left(2xy\right)^{\alpha}\:\left(1-x-y\right)^{2\beta}\:{\rm{d}}y\,{\rm{d}}x\\
			&=2^\alpha B(\alpha+1,2\beta+1)\int_{0}^{1}x^\alpha(1-x)^{\alpha+2\beta+1}\,{\rm{d}}x\\
			&=2^\alpha B(\alpha+1,2\beta+1)B(\alpha+1, \alpha+2\beta+2),
		\end{aligned}
	\end{equation*}
	with 
	\begin{equation*}
		D_{(X,Y)}=\left\lbrace(\alpha,\beta)\in\mathbb{R}^2 : \alpha>-1, \beta>-\frac{1}{2}\right\rbrace.
	\end{equation*}
\end{example}
\begin{example}
	Let $(X,Y)$ be an absolutely continuous random vector,   distributed on the   domain 
	$\mathcal{Q}=[0,1]^2$, with PDF, CDF and SF given respectively by,  for $(x,y)\in\mathcal{Q}$, 
	$$
	f(x,y)=x+y,\quad 
	F(x,y)=\frac12 xy(x+y),\quad 
	\overline{F}(x,y)=\frac12(x-1)(y-1)(x+y+2),
	$$ 	
	so that in this case $\mathcal{S}_{(X,Y)}=\mathcal{Q}$. 
	Since $\left\vert\frac{x+y}{2}\right\vert<1$ for $(x,y)\in\mathcal{Q}$, 
	%
	%
	recalling the expression of the Gauss hypergeometric function (cf.\ 15.3.1 of Abramowitz and Stegun \cite{Abramowitz:Stegun}) 
	$$
	{}_{2}{F_1(a,b,c,z)}=\frac{\Gamma(c)}{\Gamma(b)\Gamma(c-b)}\int_{0}^{1}t^{b-1}(1-t)^{c-b-1}(1-tz)^{-a}\,{\rm{d}}t,
	\qquad Re(c)>Re(b)>0,
	$$
	and making use of 15.3.7 in \cite{Abramowitz:Stegun} and of 2.21.1.4 in 
	Prudnikov et al.\ \cite{Pruonikov}, the CIGF of $(X,Y)$ is
	\begin{equation*}
		\begin{aligned}
			G_{(X,Y)}(\alpha,\beta)&=\frac{1}{2^\alpha}\sum_{k=0}^{+\infty}{\beta\choose k}\frac{1}{2^k}\frac{\Gamma(\beta+1)\Gamma(2\alpha+k+1)}{\Gamma(2+2\alpha+\beta+k)}B(\alpha+1,\beta+1)
			\\
			&\quad \times  {}_{3}{F_2(-\alpha-k,-1-2\alpha-\beta-k,\alpha+1,-2\alpha-k,-\alpha+\beta+2,-1)}  \\
			&+\frac{1}{2^\alpha}\sum_{k=0}^{+\infty}{\beta\choose k}\frac{1}{2^k}\frac{\Gamma(\alpha+1)\Gamma(-2\alpha-k-1)}{\Gamma(-\alpha-k)}B(3\alpha+k+2,\beta+1) \\
			&\quad \times {}_{3}{F_2(\alpha+1,-\beta,3\alpha+k+2,2+2\alpha+k,3\alpha+\beta+3,-1)}
		\end{aligned}
	\end{equation*}
	with
	$D_{(X,Y)}=\{(\alpha,\beta)\in\mathbb{R}^2:\alpha,\beta\in\mathbb{R}\setminus \mathbb{Z}_0^-\}$, 
	where the function ${}_{3}F_2$ can be found in \cite{Pruonikov}, for instance. 
\end{example}
\par
The following result is an immediate consequence of the involved notions. 
\begin{proposition}
	Let $(X,Y)$ be a random vector having finite CIGF. 
	If $X$ and $Y$ are independent then
	\begin{equation*}
		G_{(X,Y)}(\alpha,\beta)=G_{X}(\alpha,\beta)\:G_{Y}(\alpha,\beta) \qquad \forall\,(\alpha,\beta)\in D_{(X,Y)}.
	\end{equation*}
\end{proposition}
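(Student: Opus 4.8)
The plan is to exploit the product structure of the joint CDF and SF that independence provides, and then separate the double integral into a product of two single integrals using Fubini's theorem. First I would recall that if $X$ and $Y$ are independent, then the joint CDF factorizes as $F(x,y)=F_X(x)\,F_Y(y)$ and the joint SF factorizes as $\overline{F}(x,y)=\overline{F}_X(x)\,\overline{F}_Y(y)$, where $F_X,\overline{F}_X$ and $F_Y,\overline{F}_Y$ denote the marginal CDFs and SFs of $X$ and $Y$, respectively. Substituting these into the integrand of $G_{(X,Y)}(\alpha,\beta)$ from Definition \ref{def G_(X,Y)} gives
\begin{equation*}
	[F(x,y)]^{\alpha}[\overline{F}(x,y)]^{\beta}
	=[F_X(x)]^{\alpha}[\overline{F}_X(x)]^{\beta}\,[F_Y(y)]^{\alpha}[\overline{F}_Y(y)]^{\beta},
\end{equation*}
so the integrand itself separates into a product of a function of $x$ alone and a function of $y$ alone.

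Next I would address the domain of integration. The region $\mathcal{S}_{(X,Y)}=\{(x,y): F(x,y)\overline{F}(x,y)>0\}$ should, under independence, reduce to the product of the corresponding one-dimensional supports; that is, $\mathcal{S}_{(X,Y)}=(l_X,r_X)\times(l_Y,r_Y)$ up to boundary points of measure zero, where $l_X,r_X$ and $l_Y,r_Y$ are the support endpoints defined as in (\ref{eq:deflr}) for $X$ and $Y$. This holds because $F(x,y)>0$ precisely when both $F_X(x)>0$ and $F_Y(y)>0$, and similarly for the survival function. With a product domain and a product integrand, Fubini's theorem lets me write the double integral as
\begin{equation*}
	G_{(X,Y)}(\alpha,\beta)
	=\left(\int_{l_X}^{r_X}[F_X(x)]^{\alpha}[\overline{F}_X(x)]^{\beta}\,{\rm d}x\right)
	\left(\int_{l_Y}^{r_Y}[F_Y(y)]^{\alpha}[\overline{F}_Y(y)]^{\beta}\,{\rm d}y\right),
\end{equation*}
and each factor is exactly the one-dimensional CIGF from Eq.\ (\ref{G_X}), namely $G_X(\alpha,\beta)$ and $G_Y(\alpha,\beta)$ respectively. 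This yields the claimed factorization $G_{(X,Y)}(\alpha,\beta)=G_X(\alpha,\beta)\,G_Y(\alpha,\beta)$.

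The main obstacle I anticipate is the rigorous justification of applying Fubini's theorem, which requires that the integrand be nonnegative (true here, since CDFs and SFs lie in $[0,1]$ and the integrand is a product of nonnegative factors) and that the iterated integral be finite. The hypothesis that $(X,Y)$ has finite CIGF, i.e.\ $(\alpha,\beta)\in D_{(X,Y)}$, guarantees the double integral is finite, which by Tonelli's theorem for nonnegative integrands immediately secures the interchange and simultaneously forces each marginal factor to be finite. A minor point worth checking is the handling of the support region: one should confirm that discarding the boundary of $\mathcal{S}_{(X,Y)}$ does not affect the integral, but since the integrand is bounded and any such boundary discrepancy has Lebesgue measure zero in $\mathbb{R}^2$, this causes no difficulty. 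The remaining steps are routine substitutions.
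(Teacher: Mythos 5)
Your proof is correct and is precisely the argument the paper has in mind: the paper dismisses this proposition as ``an immediate consequence of the involved notions,'' meaning exactly your factorization $F(x,y)=F_X(x)F_Y(y)$, $\overline{F}(x,y)=\overline{F}_X(x)\overline{F}_Y(y)$, followed by Tonelli on the product domain. One cosmetic remark: your appeal to boundedness of the integrand when discarding the boundary of $\mathcal{S}_{(X,Y)}$ is unnecessary (and false when $\alpha$ or $\beta$ is negative), but harmless, since a nonnegative measurable function integrates to zero over a null set regardless.
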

\par
Consider a nonnegative random vector $(X,Y)$ with support $(0, r_1)\times(0, r_2)$, for $r_1,r_2\in(0,+\infty]$. 
In many practical situations, it is worthwhile to adopt  the following information measures for multi-device systems. 
The \emph{joint cumulative residual entropy} of $(X,Y)$ is defined as (cf.\  \cite{rao:etal}) 
\begin{equation}\label{CRE_(X,Y)}
	\mathcal{CRE}(X,Y)
	=-\int_{0}^{r_1}\text{d}x\int_{0}^{r_2}\overline{F}(x,y)\log\overline{F}(x,y) \,\text{d}y.
\end{equation}
A dynamic version of this measure has been studied by Rajesh et al.\ \cite{Rajesh:2014}. 
Similarly, the \emph{joint cumulative entropy} of $(X,Y)$ is defined as (cf.\ \cite{dicre:longo}) 
\begin{equation}\label{CE_(X,Y)}
	\mathcal{CE}(X,Y)
	=-\int_{0}^{r_1}\text{d}x\int_{0}^{r_2}F(x,y)\log F(x,y) \,\text{d}y.
\end{equation}
\begin{remark}
	We recall that, if $(X,Y)$  has support  $(0, r_1)\times(0, r_2)$, for $r_1,r_2\in(0,+\infty)$, 
	and if $X$ and $Y$ are independent, then (cf.\ Proposition 2.2 of \cite{dicre:longo}) 
	\begin{equation}\label{CE_(X,Y) when X,Y are independent}
		\mathcal{CE}(X,Y)=\left[r_2-\mathbb{E}(Y)\right]\mathcal{CE}(X)+\left[r_1-\mathbb{E}(X)\right]\mathcal{CE}(Y).
	\end{equation}
	Under the same assumptions, similarly, from Eq.\ (\ref{CRE_(X,Y)}) 
	we can observe that 
	\begin{equation}\label{CRE_(X,Y) when X,Y are independent}
		\mathcal{CRE}(X,Y)=\mathbb{E}\left(Y\right)\mathcal{CRE}(X)+\mathbb{E}\left(X\right)\mathcal{CRE}(Y).
	\end{equation}
\end{remark}
\begin{remark}
	For the discrete random vector  $(X,Y)$ considered in 
	Example \ref{discrete example biCIGF}, we have 
	$$
	\mathcal{CRE}(X)=\mathcal{CRE}(Y)=\mathcal{CE}(X)=\mathcal{CE}(Y)=-\frac12\log\left(\frac12\right)
	$$ 
	and 
	$$
	\mathcal{CRE}(X,Y)=\mathcal{CE}(X,Y)=-\left(\frac14+\theta\right)\log\left(\frac14+\theta\right), 
	\qquad \theta\in\left(-\frac14,\frac14\right).
	$$ 
	Hence, in this case  Eqs.\ (\ref{CE_(X,Y) when X,Y are independent}) and (\ref{CRE_(X,Y) when X,Y are independent}) are satisfied if and only if $X$ and $Y$ are independent, i.e.\  $\theta=0$. 
\end{remark}
\par
In analogy with the one-dimensional measures considered in Table \ref{Entropies}, 
we define the generalized and the 
fractional versions of the measures given in Eqs.\ (\ref{CRE_(X,Y)}) and (\ref{CE_(X,Y)}). 

\begin{definition}\label{def CRE_n(X,Y) and CE_n(X,Y)}
	Let $(X,Y)$ be a nonnegative random vector with support $(0,r_1)\times(0,r_2)$, 
	where $r_1,r_2\in(0,+\infty]$. 
	The generalized cumulative residual entropy of order $n$ of $(X,Y)$ is defined as
	\begin{equation*}
		\mathcal{CRE}_n(X,Y)
		=\frac{1}{n!}\int_{0}^{r_1}{\rm{d}}x\int_{0}^{r_2}\overline{F}(x,y)\left[-\log\overline{F}(x,y)\right]^n {\rm{d}}y,
		\qquad n\in \mathbb{N}_0,
	\end{equation*} 
	while the generalized cumulative entropy of order $n$ of $(X,Y)$ is defined as
	\begin{equation*}
		\mathcal{CE}_n(X,Y)=\frac{1}{n!}\int_{0}^{r_1}{\rm{d}}x\int_{0}^{r_2}F(x,y)\left[-\log F(x,y)\right]^n {\rm{d}}y,
		\qquad n\in \mathbb{N}.
	\end{equation*}
\end{definition}
\begin{definition}\label{def fractional CRE(X,Y) and CE(X,Y)}
	Let $(X,Y)$ be a nonnegative random vector with support $(0,r_1)\times(0,r_2)$, with 
	$r_1,r_2\in(0,+\infty]$. The fractional cumulative residual entropy of $(X,Y)$ is defined as 
	\begin{equation*}
		\mathcal{CRE}_\nu(X,Y)
		=\frac{1}{\Gamma(\nu+1)}\int_{0}^{r_1}{\rm{d}}x\int_{0}^{r_2}\overline{F}(x,y)\left[-\log\overline{F}(x,y)\right]^\nu {\rm{d}}y,
		\qquad\nu\ge0
	\end{equation*}
	whereas the fractional cumulative entropy of $(X,Y)$ is defined as
	\begin{equation*}
		\mathcal{CE}_\nu(X,Y)
		=\frac{1}{\Gamma(\nu+1)}\int_{0}^{r_1}{\rm{d}}x\int_{0}^{r_2}F(x,y)\left[-\log F(x,y)\right]^\nu {\rm{d}}y,
		\qquad \nu>0.
	\end{equation*}
\end{definition}
\par
According to Propositions \ref{prop CE_X and CRE_X from G_X}, 
\ref{prop CE_n and CRE_n from G_X} and \ref{prop fractional CRE CE from CIGF }, 
now we propose the following generalizations, whose proof is analogous. 
\begin{proposition}
	Let $(X,Y)$ be a random vector with finite CIGF $G_{(X,Y)}(\alpha,\beta)$, for 
	$(\alpha,\beta)\in D_{(X,Y)}$, and let 
	${\mathcal{S}_{(X,Y)}}=(0,r_1)\times(0,r_2)$ where $r_1,r_2\in(0,+\infty]$. 
	If $(0,1)\in D_{(X,Y)}$, then
	\begin{equation*}
		\mathcal{CRE}(X,Y)=-\frac{\partial}{\partial\beta}G_{(X,Y)}(\alpha,\beta)\Big\vert_{\alpha=0,\beta=1},
	\end{equation*}	
	\begin{equation*}
		\mathcal{CRE}_n(X,Y)
		= \frac{(-1)^n}{n!} \frac{\partial^n}{\partial\beta^n}G_{(X,Y)}(\alpha,\beta)\Big\vert_{\alpha=0,\beta=1}, 
		\qquad n\in \mathbb{N}_0,
	\end{equation*}
	\begin{equation*}
		\mathcal{CRE}_\nu(X,Y)=\frac{1}{\Gamma(\nu+1)}\left(\prescript{C}{}{D_{-,\beta}^\nu G_{(X,Y)}}\right)(\alpha,\beta)\Big\vert_{\alpha=0,\beta=1},
		\qquad \nu> 0,
	\end{equation*}
	If $(1,0)\in D_{(X,Y)}$, then
	\begin{equation*}
		\mathcal{CE}(X,Y)=-\frac{\partial}{\partial\alpha}G_{(X,Y)}(\alpha,\beta)\Big\vert_{\alpha=1,\beta=0}, 
	\end{equation*}
	\begin{equation*}
		\mathcal{CE}_n(X,Y)= \frac{(-1)^n}{n!} \frac{\partial^n}{\partial\alpha^n}G_{(X,Y)}(\alpha,\beta)\Big\vert_{\alpha=1,\beta=0}, 
		\qquad n\in \mathbb{N},
	\end{equation*}
	\begin{equation*}
		\mathcal{CE}_\nu(X,Y)=\frac{1}{\Gamma(\nu+1)}\left(\prescript{C}{}{D_{-,\alpha}^\nu G_{(X,Y)}}\right)(\alpha,\beta)\Big\vert_{\alpha=1,\beta=0},
		\qquad \nu>0.
	\end{equation*}
\end{proposition}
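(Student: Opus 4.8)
The plan is to mirror, essentially line by line, the arguments used for the univariate statements in Propositions \ref{prop CE_X and CRE_X from G_X}, \ref{prop CE_n and CRE_n from G_X} and \ref{prop fractional CRE CE from CIGF }, replacing the single integral over $(l,r)$ by the double integral over $\mathcal{S}_{(X,Y)}$ and the univariate distribution functions $F(x),\overline{F}(x)$ by their bivariate counterparts $F(x,y),\overline{F}(x,y)$. The structural point that makes this transfer immediate is that, in Definition \ref{def G_(X,Y)}, the parameters $\alpha$ and $\beta$ enter the integrand only through the exponents of $F(x,y)$ and $\overline{F}(x,y)$, exactly as in Eq.\ (\ref{G_X}); hence every ordinary or fractional differentiation with respect to $\alpha$ or $\beta$ may be carried out under the integral sign and acts solely on the integrand.

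First I would establish the bivariate analogue of the identity (\ref{useful for fractional case}), namely
\begin{equation*}
\frac{\partial^n}{\partial\beta^n}G_{(X,Y)}(\alpha,\beta)=\iint_{\mathcal{S}_{(X,Y)}}\left(\log\overline{F}(x,y)\right)^n[F(x,y)]^{\alpha}[\overline{F}(x,y)]^{\beta}\,{\rm d}x\,{\rm d}y,
\end{equation*}
together with the companion formula in which $\partial^n/\partial\alpha^n$ produces the factor $(\log F(x,y))^n$. Specializing these at $(\alpha,\beta)=(0,1)$ and $(\alpha,\beta)=(1,0)$ respectively, and inserting the normalizing constant $(-1)^n/n!$, yields the integer-order displays, because the integrands then collapse to $\overline{F}(x,y)[-\log\overline{F}(x,y)]^n$ and $F(x,y)[-\log F(x,y)]^n$, matching Definition \ref{def CRE_n(X,Y) and CE_n(X,Y)}. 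The cases $n=0$ and $n=1$ recover $\mathcal{CRE}(X,Y)$ and $\mathcal{CE}(X,Y)$ of Eqs.\ (\ref{CRE_(X,Y)}) and (\ref{CE_(X,Y)}).

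For the fractional statements I would reproduce the computation in the proof of Proposition \ref{prop fractional CRE CE from CIGF }. Writing out the Caputo partial fractional derivative $\prescript{C}{}{D_{-,\beta}^\nu}$ applied to $G_{(X,Y)}$, substituting the integer-derivative identity above, and then invoking Fubini's theorem to exchange the $t$-integration of the Caputo operator with the double spatial integral, I am left with the inner integral $\int_{\beta}^{+\infty}[\overline{F}(x,y)]^{t}(t-\beta)^{n-\nu-1}\,{\rm d}t$. The substitutions $t-\beta=z$ and $\gamma=-z\log\overline{F}(x,y)$ evaluate this to $[\overline{F}(x,y)]^{\beta}\,\Gamma(n-\nu)\,(-\log\overline{F}(x,y))^{\nu-n}$, exactly as in the univariate case; after cancellation of $\Gamma(n-\nu)$ and combination of the two $(-1)^n$ factors, the integrand reduces to $(-\log\overline{F}(x,y))^{\nu}[F(x,y)]^{\alpha}[\overline{F}(x,y)]^{\beta}$, and evaluation at $(\alpha,\beta)=(0,1)$ gives the formula for $\mathcal{CRE}_\nu(X,Y)$. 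The dual for $\mathcal{CE}_\nu(X,Y)$ follows symmetrically by differentiating in $\alpha$ and evaluating at $(1,0)$.

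The only genuine work beyond the univariate proofs is justifying the two interchanges of limiting operations: differentiation, and the improper $t$-integral of the Caputo operator, with the double integral over $\mathcal{S}_{(X,Y)}$. I expect this to be the main obstacle, though it is resolved exactly as before: the hypothesis that $G_{(X,Y)}(\alpha,\beta)$ is finite on a neighbourhood of the evaluation point, together with the boundedness $F,\overline{F}\in[0,1]$ and the local integrability of their logarithmic powers, supplies the dominating functions needed both for differentiation under the integral sign and for Fubini's theorem. No new phenomenon arises from the passage to two dimensions, which is precisely why the proof is genuinely analogous to those of the one-dimensional results.
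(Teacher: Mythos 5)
Your proposal is correct and is exactly the proof the paper intends: the paper gives no separate argument for this proposition, stating only that the proof is ``analogous'' to Propositions \ref{prop CE_X and CRE_X from G_X}, \ref{prop CE_n and CRE_n from G_X} and \ref{prop fractional CRE CE from CIGF }, and your write-up carries out precisely that transfer, including the bivariate version of identity (\ref{useful for fractional case}), the Caputo--Fubini computation with the substitutions $t-\beta=z$ and $\gamma=-z\log\overline{F}(x,y)$, and the justification of the interchanges. One trivial slip: $\mathcal{CRE}(X,Y)$ and $\mathcal{CE}(X,Y)$ are recovered from the case $n=1$ of each respective family (not $n=0$ and $n=1$), since $n=0$ in the $\beta$-family yields $\iint\overline{F}(x,y)\,{\rm d}x\,{\rm d}y=\mathcal{CRE}_0(X,Y)$, but this does not affect the argument, as the first-derivative identities are established directly.
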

\par
In analogy with Eq.\ (\ref{Odds Function}), 
if $(X,Y)$ is a random vector with CDF $F(x,y)$ and SF $\overline{F}(x,y)$, 
for all $(x,y)\in\mathcal{S}_{(X,Y)}$ 
we can  introduce the two-dimensional   \emph{odds function}   as
\begin{equation}\label{eq:odd2d}
	\theta(x,y)=\frac{\overline{F}(x,y)}{F(x,y)}.
\end{equation}
Hence, the two-dimensional extension  of  Eq.\ (\ref{G_X in terms of only Odds Function with beta})  
can be given   in terms of the function  in (\ref{eq:odd2d}), so  that
\begin{equation*}
	G_{(X,Y)}(-\beta,\beta)=\iint_{\mathcal{S}_{(X,Y)}}[\theta(x,y)]^{\beta}\:\text{d}x\:\text{d}y
\end{equation*}
for all  $\beta \in \mathbb{R}$ such that the right-hand-side is finite. 
\section{Concluding remarks}\label{sect:final}
In this paper, we defined the cumulative information generating function and its suitable distortions-based extensions. 
The CIGF is noteworthy in the context of information measures, since it allows to determine the cumulative residual entropy and the cumulative entropy of a given probability distribution, even in their generalized and fractional forms. 
\par
Several results, properties and bounds  have been studied, also with reference to symmetry properties 
and relations with the equilibrium density. 
Moreover, we illustrated that the considered   functions  are variability measures 
which extend the Gini mean semi-difference. 
\par
In the realm of reliability theory, the CIGF and its extensions have been found useful to study 
properties of $k$-out-of-$n$ systems, with special regards to 
the  reliability of   multi-component stress-strength systems. 
\par
Future developments can be oriented to connections with other 
notions, such as the Information Divergence  
(see Toulias and Kitsos \cite{toulias:kitsos}) 
and the R\'enyi Entropy (see, for instance, Buryak and Mishura \cite{Buryak:Mishura}), 
also with reference to possible applications and distortions-based extensions 
of the two-dimensional CIGF introduced in Section \ref{section bidimensional CIGF}. 
%
\section*{Credit Author Statement} 
All authors have contributed equally to this work.

\section*{Declaration of Competing Interest} 
The authors declare that they have no known competing financial interests or personal relationships that could have appeared to influence the work reported in this paper.

\section*{Acknowledgements} 
The authors are members of the group GNCS of INdAM (Istituto Nazionale di Alta Matematica).
This work is partially supported by MIUR--PRIN 2017, Project ``Stochastic Models for Complex Systems'' (no. 2017JFFHSH).


\end{document}